\providecommand{\tabularnewline}{\\}
\numberwithin{equation}{section}
\numberwithin{figure}{section}
\newtheorem{thm}{Theorem}
  \newtheorem{example}[thm]{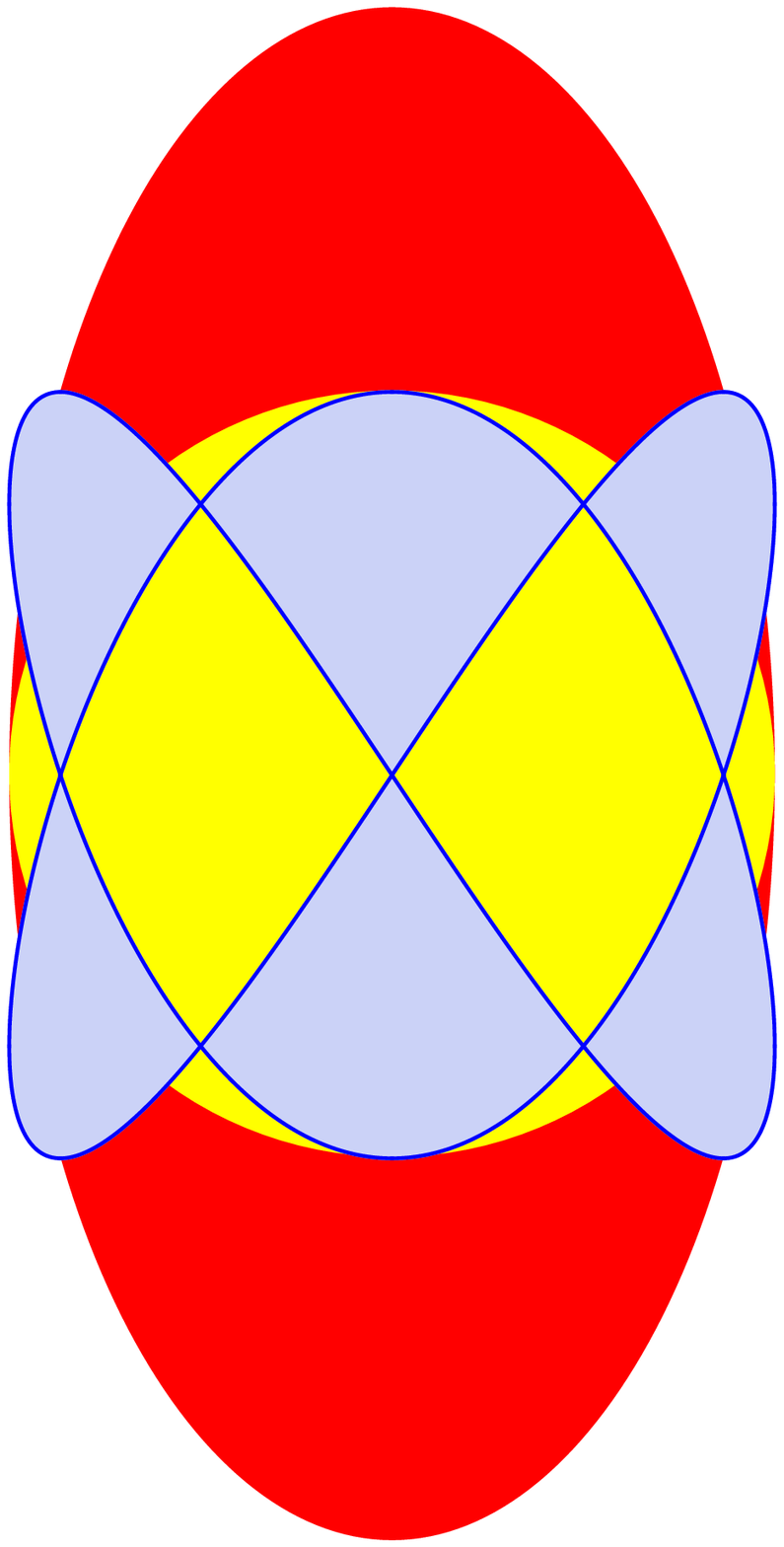}
  \newtheorem{defn}[thm]{Definition}
  \newtheorem{lem}[thm]{Lemma}
  \newtheorem{notation}[thm]{Notation}
  \newtheorem{prop}[thm]{Proposition}
  \newtheorem{rem}[thm]{Remark}
  \newtheorem{algorithm}[thm]{Algorithm}
  \newtheorem{cor}[thm]{Corollary}
\begin{document}

\title{Cylindrical Algebraic Decomposition Using Local Projections}

\author{Adam Strzebo\'nski}

\address{Wolfram Research Inc., 100 Trade Centre Drive, Champaign, IL 61820,
U.S.A. }

\email{adams@wolfram.com}

\maketitle

\begin{abstract}
We present an algorithm which computes a cylindrical algebraic decomposition
of a semialgebraic set using projection sets computed for each cell
separately. Such local projection sets can be significantly smaller
than the global projection set used by the Cylindrical Algebraic Decomposition
(CAD) algorithm. This leads to reduction in the number of cells the
algorithm needs to construct. We give an empirical comparison of our
algorithm and the classical CAD algorithm.
\end{abstract}

\section{Introduction}

A semialgebraic set is a subset of $\mathbb{R}^{n}$ which is a solution
set of a system of polynomial equations and inequalities. Computation
with semialgebraic sets is one of the core subjects in computer algebra
and real algebraic geometry. A variety of algorithms have been developed
for real system solving, satisfiability checking, quantifier elimination,
optimization and other basic problems concerning semialgebraic sets
\cite{C,BPR,CJ,CMXY,DSW,GV,HS,LW,R,T,W1}. Every semialgebraic set
can be represented as a finite union of disjoint cells bounded by
graphs of algebraic functions. The Cylindrical Algebraic Decomposition
(CAD) algorithm \cite{C,CJ,S7} can be used to compute a cell decomposition
of any semialgebraic set presented by a quantified system of polynomial
equations and inequalities. An alternative method of computing cell
decompositions is given in \cite{CMXY}. Cell decompositions computed
by the CAD algorithm can be represented directly \cite{S7,S8,B2}
as cylindrical algebraic formulas (CAF; see the next section for a
precise definition). A CAF representation of a semialgebraic set $A$
can be used to decide whether $A$ is nonempty, to find the minimal
and maximal values of the first coordinate of elements of $A$, to
generate an arbitrary element of $A$, to find a graphical representation
of $A$, to compute the volume of $A$, or to compute multidimensional
integrals over $A$ (see \cite{S4}). 

The CAD algorithm takes a system of polynomial equations and inequalities
and constructs a cell decomposition of its solution set. The algorithm
consists of two phases. The projection phase finds a set of polynomials
whose roots are sufficient to describe the cell boundaries. The lifting
phase constructs a cell decomposition, one dimension at a time, subdividing
cells at all roots of the projection polynomials. However, some of
these subdivisions may be unnecessary, either because of the geometry
of the roots or because of the Boolean structure of the input system.
In this paper we propose an algorithm which combines the two phases.
It starts with a sample point and constructs a cell containing the
point on which the input system has a constant truth value. Projection
polynomials used to construct the cell are selected based on the structure
of the system at the sample point. Such a local projection set can
often be much smaller than the global projection set used by the CAD
algorithm. The idea to use such locally valid projections was first
introduced in \cite{JM}, in an algorithm to decide the satisfiability
of systems of real polynomial equations and inequalities. It was also
used in \cite{B4}, in an algorithm to construct a single open cell
from a cylindrical algebraic decomposition. 
\begin{example}
\label{exa:MainExample}Find a cylindrical algebraic decomposition
of the solution set of $S=f_{1}<0\vee(f_{2}\leq0\wedge f_{3}\leq0)$,
where $f_{1}=4x^{2}+y^{2}-4$, $f_{2}=x^{2}+y^{2}-1$, and $f_{3}=16x^{6}-24x^{4}+9x^{2}+4y^{4}-4y^{2}$.

\includegraphics[width=0.65\columnwidth, trim = -50mm 3mm 20mm 5mm, clip]{Example}

The solution set of $S$ is equal to the union of the open ellipse
$f_{1}<0$ and the intersection of the closed disk $f_{2}\leq0$ and
the set $f_{3}\leq0$ bounded by a Lissajous curve. As can be seen
in the picture, the set is equal to the open ellipse $f_{1}<0$. The
CAD algorithm uses a projection set consisting of the discriminants
and the pairwise resultants of $f_{1}$, $f_{2}$, and $f_{3}$. It
computes a cell decomposition of the solution set of $S$ by constructing
$357$ cells such that all $f_{1}$, $f_{2}$, and $f_{3}$ have a
constant sign on each cell. Note however, that a cell decomposition
of the solution set of $S$ can be obtained by considering the following
$13$ cells. On each cell only some of $f_{1}$, $f_{2}$, and $f_{3}$
have a constant sign, but those signs are sufficient to determine
the truth value of $S$.
\begin{enumerate}
\item $S$ is $true$ on $-1<x<1\wedge-2\sqrt{1-x^{2}}<y<2\sqrt{1-x^{2}}$
because $f_{1}<0$.
\item $S$ is $false$ on $-1<x<1\wedge y<-2\sqrt{1-x^{2}}$ and on $-1<x<1\wedge y>2\sqrt{1-x^{2}}$
because $f_{1}>0\wedge f_{2}>0$.
\item $S$ is $false$ on $-1<x<1\wedge y=-2\sqrt{1-x^{2}}$ and on $-1<x<1\wedge y=2\sqrt{1-x^{2}}$
because $f_{1}=0\wedge f_{2}>0$.
\item $S$ is $false$ on $x<-1$ and on $x>1$ because $f_{1}>0\wedge f_{2}>0$.
\item $S$ is $false$ on $x=-1\wedge y<0$ and on $x=-1\wedge y>0$ because
$f_{1}>0\wedge f_{2}>0$.
\item $S$ is $false$ on $x=-1\wedge y=0$ because $f_{1}=0\wedge f_{3}>0$.
\item $S$ is $false$ on $x=1\wedge y<0$ and on $x=1\wedge y>0$ because
$f_{1}>0\wedge f_{2}>0$.
\item $S$ is $false$ on $x=1\wedge y=0$ because $f_{1}=0\wedge f_{3}>0$.
\end{enumerate}
Determining the cell bounds for the cell stack $(1)$-$(3)$ requires
computation of roots of $discr_{y}f_{1}$, $discr_{y}f_{2}$, and
$res_{y}(f_{1},f_{2})$ in $x$ and roots of $f_{1}(0,y)$ and $f_{2}(0,y)$
in $y$. Determining the cell bounds for the cells $(4)$ requires
computation of roots of $discr_{y}f_{1}$ and $discr_{y}f_{2}$ in
$x$ and roots of $f_{1}(-2,y)$, $f_{2}(-2,y)$, $f_{1}(2,y)$ and
$f_{2}(2,y)$ in $y$. Determining the cell bounds for the cell stacks
$(5)$-$(6)$ and $(7)$-$(8)$ requires computation of roots of $f_{1}(-1,y)$,
$f_{2}(-1,y)$, $f_{3}(-1,y)$, $f_{1}(1,y)$, $f_{2}(1,y)$ and $f_{3}(1,y)$
in $y$. Polynomial $f_{3}$ is not used to compute any of the projections
and its roots in $y$ are computed only for two values of $x$. The
algorithm we propose in this paper computes a cell decomposition of
the solution set of $S$ by constructing the $13$ cells given in
$(1)$-$(8)$. Details of the computation for this example are given
in Section \ref{sub:Example}.
\end{example}

\begin{example}
Find a cylindrical algebraic decomposition of the solution set of
$S=ax^{4}+bx^{3}+cx^{2}+dx+e\geq0$ in the variable order $(a,b,c,d,e,x)$.

In this example the system is not well-oriented, hence the CAD algorithm
needs to use Hong's projection operator for the first three projections.
However, the additional projection polynomials are necessary only
for the cells on which a McCallum's projection polynomial vanishes
identically. For most cells local projection can be computed using
McCallum's projection operator, and for the few cells on which a McCallum's
projection polynomial vanishes identically local projection needs
to use some, but usually not all, polynomials from Hong's projection
operator. The algorithm LPCAD we propose in this paper computes a
cell decomposition of the solution set of $S$ by constructing $523$
cells in $0.95$ seconds of CPU time. The CAD algorithm did not finish
the computation in $72$ hours. A version of LPCAD using only local
projections based on Hong's projection operator constructs $1375$
cells and takes $2.72$ seconds of CPU time. 
\end{example}

\section{Preliminaries}

A \emph{system of polynomial equations and inequalities} in variables
$x_{1},\ldots,x_{n}$ is a formula\[
S(x_{1},\ldots,x_{n})=\bigvee_{1\leq i\leq l}\bigwedge_{1\leq j\leq m}f_{i,j}(x_{1},\ldots,x_{n})\rho_{i,j}0\]
where $f_{i,j}\in\mathbb{R}[x_{1},\ldots,x_{n}]$, and each $\rho_{i,j}$
is one of $<,\leq,\geq,>,=,$ or $\neq$. 

A subset of $\mathbb{R}^{n}$ is \emph{semialgebraic} if it is a solution
set of a system of polynomial equations and inequalities. 

A \emph{quantified system of real polynomial equations and inequalities}
in free variables $x_{1},\ldots,x_{n}$ and quantified variables $t_{1},\ldots,t_{m}$
is a formula \[
Q_{1}t_{1}\ldots Q_{m}t_{m}S(t_{1},\ldots,t_{m};x_{1},\ldots,x_{n})\]
 Where $Q_{i}$ is $\exists$ or $\forall$, and $S$ is a system
of real polynomial equations and inequalities in $t_{1},\ldots,t_{m},x_{1},\ldots,x_{n}$.

By Tarski's theorem (see \cite{T}), solution sets of quantified systems
of real polynomial equations and inequalities are semialgebraic.
\begin{notation}
For $k\geq1$, let $\overline{a}$ denote a $k$-tuple $(a_{1},\ldots,a_{k})$
of real numbers and let $\overline{x}$ denote a $k$-tuple $(x_{1},\ldots,x_{k})$
of variables.
\end{notation}
Every semialgebraic set can be represented as a finite union of disjoint
\emph{cells} (see \cite{L}), defined recursively as follows.
\begin{enumerate}
\item A cell in $\mathbb{R}$ is a point or an open interval.
\item A cell in $\mathbb{R}^{k+1}$ has one of the two forms\begin{eqnarray*}
 & \{(\overline{a},a_{k+1}):\overline{a}\in C_{k}\wedge a_{k+1}=r(\overline{a})\}\\
 & \{(\overline{a},a_{k+1}):\overline{a}\in C_{k}\wedge r_{1}(\overline{a})<a_{k+1}<r_{2}(\overline{a})\}\end{eqnarray*}
where $C_{k}$ is a cell in $\mathbb{R}^{k}$, $r$ is a continuous
algebraic function, and $r_{1}$ and $r_{2}$ are continuous algebraic
functions, $-\infty$, or $\infty$, and $r_{1}<r_{2}$ on $C_{k}$. 
\end{enumerate}
A finite collection $D$ of cells in $\mathbb{R}^{n}$ is \emph{cylindrically
arranged} if for any $C_{1},C_{2}\in D$ and $k\leq n$ the projections
of $C_{1}$ and $C_{2}$ on $\mathbb{R}^{k}$ are either disjoint
or identical. 

Given a semialgebraic set presented by a quantified system of polynomial
equations and inequalities, the CAD algorithm can be used to decompose
the set into a cylindrically arranged finite collection of cells.
The collection of cells is represented by a cylindrical algebraic
formula (CAF). A CAF describes each cell by giving explicit algebraic
function bounds and the Boolean structure of a CAF reflects the cylindrical
arrangement of cells. Before we give a formal definition of a CAF,
let us first introduce some terminology.

Let $k\geq1$ and let $f=c_{d}y^{d}+\ldots+c_{0}$, where $c_{0},\ldots,c_{d}\in\mathbb{\mathbb{Z}}[\overline{x}]$.
A \emph{real algebraic function} given by the \emph{defining polynomial}
$f$ and a \emph{root number} $p\in\mathbb{N}_{+}$ is the function\begin{equation}
Root_{y,p}f:\mathbb{R}^{k}\ni\overline{a}\longrightarrow Root_{y,p}f(\overline{a})\in\mathbb{R}\label{rootfun}\end{equation}
where $Root_{y,p}f(\overline{a})$ is the $p$-th real root of $f(\overline{a},y)\in\mathbb{R}[y]$.
The function is defined for those values of $\overline{a}$ for which
$f(\overline{a},y)$ has at least $p$ real roots. The real roots
are ordered by the increasing value and counted with multiplicities.
A real algebraic number $Root_{y,p}f\in\mathbb{R}$ given by a \emph{defining
polynomial} $f\in\mathbb{Z}[y]$ and a \emph{root number} $p$ is
the $p$-th real root of $f$. See \cite{S2,S4} for more details
on how algebraic numbers and functions can be implemented in a computer
algebra system.

Let $C$ be a connected subset of $\mathbb{R}^{k}$. $Root_{y,p}f$
is\emph{ regular} on \emph{$C$} if it is continuous on $C$, $c_{d}(\overline{a})\neq0$
for all $\overline{a}\in C$, and there exist\emph{ }$m\in\mathbb{\mathbb{N}}_{+}$
such that for any $\overline{a}\in C$ $Root_{y,p}f(\overline{a})$
is a root of $f(\overline{a},y)$ of multiplicity $m$. 

$f$ is \emph{degree-invariant} on $C$ if there exist\emph{ }$e\in\mathbb{\mathbb{N}}$
such that if $c_{d}(\overline{a})=\ldots=c_{e+1}(\overline{a})=0\wedge c_{e}(\overline{a})\neq0$
for all $\overline{a}\in C$. 

A set $W=\{f_{1},\ldots,f_{m}\}$ of polynomials is \emph{delineable}
on $C$ if all elements of $W$ are degree-invariant on $C$ and for
$1\leq i\leq m$\[
f_{i}^{-1}(0)\cap(C\times\mathbb{R})=\{r_{i,1},\ldots,r_{i,l_{i}}\}\]
where $r_{i,1},\ldots,r_{i,l_{i}}$ are disjoint regular real algebraic
functions and for $i_{1}\neq i_{2}$ $r_{i_{1},j_{1}}$ and $r_{i_{2},j_{2}}$
are either disjoint or equal. Functions $r_{i,j}$ are \emph{root
functions of $f_{i}$ over $C$}.

A set $W=\{f_{1},\ldots,f_{m}\}$ of polynomials is \emph{analytic
delineable} on a connected analytic submanifold $C$ of $\mathbb{R}^{k}$
if $W$ is delineable on $C$ and the root functions of elements of
$W$ over $C$ are analytic.

Let $W$ be delineable on $C$, let $r_{1}<\ldots<r_{l}$ be all root
functions of elements of $W$ over $C$, and let $r_{0}=-\infty$
and $r_{l+1}=\infty$. For $1\leq i\leq l$, the $i$-th \emph{$W$-section
over $C$} is the set\[
\{(\overline{a},a_{k+1}):\overline{a}\in C\wedge a_{k+1}=r_{i}(\overline{a})\}\]
For $1\leq i\leq l+1$, the $i$-th \emph{$W$-sector over $C$} is
the set\[
\{(\overline{a},a_{k+1}):\overline{a}\in C\wedge r_{i-1}(\overline{a})<a_{k+1}<r_{i}(\overline{a})\}\]

A formula $F$ is an \emph{algebraic constraint} with \emph{bounds}
$BDS(F)$ if it is a level-$k$ equational or inequality constraint
with $1\leq k\leq n$ defined as follows.\emph{ }
\begin{enumerate}
\item \emph{A level}-$1$ \emph{equational constraint} has the form $x_{1}=r$,
where $r$ is a real algebraic number, and $BDS(F)=\{r\}$.
\item \emph{A level}-$1$ \emph{inequality constraint} has the form $r_{1}<x_{1}<r_{2}$,
where $r_{1}$ and $r_{2}$ are real algebraic numbers, $-\infty$,
or $\infty$, and $BDS(F)=\{r_{1},r_{2}\}\setminus\{-\infty,\infty\}$. 
\item \emph{A level}-$k+1$ \emph{equational constraint} has the form $x_{k+1}=r(\overline{x})$,
where $r$ is a real algebraic function, and $BDS(F)=\{r\}$.
\item \emph{A level}-$k+1$ \emph{inequality constraint} has the form $r_{1}(\overline{x})<x_{k+1}<r_{2}(\overline{x})$,
where $r_{1}$ and $r_{2}$ are real algebraic functions, $-\infty$,
or $\infty$, and $BDS(F)=\{r_{1},r_{2}\}\setminus\{-\infty,\infty\}$. 
\end{enumerate}
A level-$k+1$ algebraic constraint $F$ is \emph{regular} on a connected
set $C\subseteq\mathbb{R}^{k}$ if all elements of $BDS(F)$ are regular
on $C$ and, if $F$ is an inequality constraint, $r_{1}<r_{2}$ on
$C$.
\begin{defn}
An \emph{atomic cylindrical algebraic formula (CAF)} $F$ in $(x_{1},\ldots,x_{n})$
has the form $F_{1}\wedge\ldots\wedge F_{n}$, where $F_{k}$ is a
level-$k$ algebraic constraint for $1\leq k\leq n$ and $F_{k+1}$
is regular on the solution set of $F_{1}\wedge\ldots\wedge F_{k}$
for $1\leq k<n$. 

\emph{Level-$k$ cylindrical subformulas} are defined recursively
as follows
\begin{enumerate}
\item A level-$n$ cylindrical subformula is a disjunction of level-$n$
algebraic constraints.
\item A level-$k$ cylindrical subformula, with $1\leq k<n$, has the form\[
(F_{1}\wedge G_{1})\vee\ldots\vee(F_{m}\wedge G_{m})\]
where $F_{i}$ are level-$k$ algebraic constraints and $G_{i}$ are
level-$k+1$ cylindrical subformulas.
\end{enumerate}
A \emph{cylindrical algebraic formula (CAF)} is a level-$1$ cylindrical
subformula $F$ such that distributing conjunction over disjunction
in $F$ gives \[
DNF(F)=F_{1}\vee\ldots\vee F_{l}\]
where each $F_{i}$ is an atomic CAF. 
\end{defn}
Given a quantified system of real polynomial equations and inequalities
the CAD algorithm \cite{S7} returns a CAF representation of its solution
set. 
\begin{example}
The following formula $F(x,y,z)$ is a CAF representation of the closed
unit ball.\begin{eqnarray*}
F(x,y,z) & := & x=-1\wedge y=0\wedge z=0\vee\\
 &  & -1<x<1\wedge b_{2}(x,y,z)\vee\\
 &  & x=1\wedge y=0\wedge z=0\\
b_{2}(x,y,z) & := & y=R_{1}(x)\wedge z=0\vee\\
 &  & R_{1}(x)<y<R_{2}(x)\wedge b_{2,2}(x,y,z)\vee\\
 &  & y=R_{2}(x)\wedge z=0\\
b_{2,2}(x,y,z) & := & z=R_{3}(x,y)\vee\\
 &  & R_{3}(x,y)<z<R_{4}(x,y)\vee\\
 &  & z=R_{4}(x,y)\end{eqnarray*}
where \begin{eqnarray*}
R_{1}(x) & = & Root_{y,1}(x^{2}+y^{2})=-\sqrt{1-x^{2}}\\
R_{2}(x) & = & Root_{y,2}(x^{2}+y^{2})=\sqrt{1-x^{2}}\\
R_{3}(x,y) & = & Root_{z,1}(x^{2}+y^{2}+z^{2})=-\sqrt{1-x^{2}-y^{2}}\\
R_{4}(x,y) & = & Root_{z,2}(x^{2}+y^{2}+z^{2})=\sqrt{1-x^{2}-y^{2}}\end{eqnarray*}

\end{example}

\section{CAD construction using local projections}

In this section we describe an algorithm for computing a CAF representation
of the solution set of a system of polynomial equations and inequalities.
The algorithm uses local projections computed separately for each
cell. For simplicity we assume that the system is not quantified.
The algorithm can be extended to quantified systems following the
ideas of \cite{CH}. The algorithm in its version given here does
not take advantage of equational constraints. The use of equational
constraints will be described in the full version of the paper.

The main, recursive, algorithm used for CAD construction is Algorithm
\ref{alg:LPCAD}. Let us sketch the algorithm here, a detailed description
is given later in this section. The input is\emph{ }a system $S(x_{1},\ldots,x_{n})$\emph{
}of polynomial equations and inequalities and a point $\overline{a}=(a_{1},\ldots,a_{k})\in\mathbb{R}^{k}$
with $0\leq k<n$. The algorithm finds a level-$k+1$ cylindrical
subformula $F$ and a set of polynomials \emph{$V\subseteq\mathbb{R}[x_{1},\ldots,x_{k}]$}
such that for any cell $C\subseteq\mathbb{R}^{k}$ containing $\overline{a}$
on which all elements of $V$ have constant signs\emph{ }\[
(x_{1},\ldots,x_{k})\in C\Rightarrow\left(F(x_{1},\ldots,x_{n})\Longleftrightarrow S(x_{1},\ldots,x_{n})\right)\]
The formula $F$ can be interpreted as a description of the solution
set of $S$ as a finite collection of cylindrically arranged cells
in $\mathbb{R}^{n-k}$, parametrized by the values of $(x_{1},\ldots,x_{k})$.
The description is valid locally to $\overline{a}$, where the meaning
of {}``locally'' is determined by $V$. The approach is to find
algebraic constraints \[
G_{1}(\overline{x},x_{k+1}),\ldots,G_{m}(\overline{x},x_{k+1})\]
 and cylindrical subformulas $H_{1},\ldots,H_{m}$ such that the solution
sets of \[
G_{1}(\overline{a},x_{k+1}),\ldots,G_{m}(\overline{a},x_{k+1})\]
form a decomposition of $\mathbb{R}$ and $H_{i}$ describes the solution
set of $S$ locally to $\{\overline{a}\}\times\{x_{k+1}\,:\, G_{i}(\overline{a},x_{k+1})\}$.
To find $G$'s, $H$'s, and $V$ we start with a stack containing
the interval $(-\infty,\infty)$ and until the stack is emptied execute
the following steps. We take an interval $I$ off stack and pick $a_{k+1}\in I$.
If evaluating the $k+1$-variate polynomials in $S$ at $(\overline{a},a_{k+1})$
suffices to establish the truth value of $S$, let $P$ be a set of
$k+1$-variate polynomials in $S$ sufficient to establish the truth
value of $S$ and let $H$ be the truth value. Otherwise, let $H$
and $P$ be, respectively, the formula and the set of polynomials
returned by Algorithm \ref{alg:LPCAD} applied to $S$ and $(\overline{a},a_{k+1})$.
We use projection to compute a set \emph{$W\subseteq\mathbb{R}[x_{1},\ldots,x_{k}]$}
such that $P$ is delineable on any cell containing $\overline{a}$
on which all elements of $W$ have constant signs and we add the elements
of $W$ to $V$. Let $J$ be the interval containing $a_{k+1}$ bounded
by the nearest roots of elements of $P$ and let $G$ be the constraint
on $x_{k+1}$ whose bounds are the corresponding algebraic functions.
Note that if $P$ is delineable on a cell $C$ containing $\overline{a}$
then the elements of $P$ have constant signs on $D=\{(\overline{x},x_{k+1})\,:\,\overline{x}\in C\wedge G(\overline{x},x_{k+1})\}$
and hence $H$ is equivalent to $S$ on $D$. We add $G$ and $H$
to the list of $G$'s, $H$'s, and, if $I\setminus J$ is nonempty,
we add the components of $I\setminus J$ to stack. When the stack
is empty we use projection to compute a set \emph{$W\subseteq\mathbb{R}[x_{1},\ldots,x_{k}]$}
such the set of polynomials whose roots appear as bounds in $G$'s
are delineable on any cell containing $\overline{a}$ on which all
elements of $W$ have constant signs and we add the elements of $W$
to $V$. As required, the formula $F=(G_{1}\wedge H_{1})\vee\ldots\vee(G_{m}\wedge H_{m})$
is equivalent to $S$ on any cell containing $\overline{a}$ on which
all elements of $V$ have constant signs. 

To compute a CAF representation of the solution set of $S$ we call
Algorithm \ref{alg:LPCAD} with $k=0$.
\begin{notation}
We will use the following notations.
\begin{enumerate}
\item For a finite set of polynomials $P$, let $\overline{P}$ denote the
set of irreducible factors of the elements of $P$. 
\item Let $IRR_{k}$ denote the irreducible elements of $\mathbb{R}[x_{1},\ldots,x_{k}]\setminus\mathbb{R}[x_{1},\ldots,x_{k-1}]$.
\item For a set $A\subseteq\mathbb{R}^{n}$ and $k\leq n$ let $\Pi_{k}(A)$
denote the projection of $A$ on $\mathbb{R}^{k}$.
\end{enumerate}
\end{notation}
In this section we assume that all polynomials have coefficients in
a fixed computable subfield $K\subseteq\mathbb{R}$, irreducibility
is understood to be in the ring of polynomials with coefficients in
$K$, irreducible factors are always content-free and chosen in a
canonical way, and finite sets of polynomials are always ordered according
to a fixed linear ordering in the set of all polynomials with coefficients
in $K$. In our implementation $K=\mathbb{Q}$.

Whenever we write \emph{$a=(a_{1},\ldots,a_{k})\in\mathbb{R}^{k}$}
with $k\geq0$ we include the possibility of $a=()$, the only element
of $\mathbb{R}^{0}$.

\subsection{Local projection}
\begin{defn}
Let $P\subseteq\mathbb{R}[x_{1},\ldots,x_{n}]$ be a finite set of
polynomials and let $a=(a_{1},\ldots,a_{n-1})\in\mathbb{R}^{n-1}$,
where $n\geq1$. Let $W=(W_{1},\ldots,W_{n})$ be such that $W_{k}$
is a finite subset of $IRR_{k}$ and $\overline{P}\cap IRR_{k}\subseteq W_{k}$
for $1\leq k\leq n$. $W$ is a \emph{local projection sequence} for
$P$ at $a$ iff, for any $1\leq k<n$ and any cell $C\subseteq\mathbb{R}^{k}$,
if $(a_{1},\ldots,a_{k})\in C$ and all elements of $W_{j}$ for $1\leq j\leq k$
have constant signs on $\Pi_{j}(C)$ then the set of elements of $W_{k+1}$
that are not identically zero on $C\times\mathbb{R}$ is delineable
over $C$. 
\end{defn}
To compute local projections we use the following two projection procedures,
derived, respectively, from McCallum's projection operator \cite{MC1,MC2,B}
and Hong's projection operator \cite{H}.
\begin{algorithm}
\label{alg:LProjMC}(LProjMC)\\
Input:\emph{ }$P=\{p_{1},\ldots,p_{m}\}\subseteq IRR_{k+1}$ \emph{and
}$\overline{a}=(a_{1},\ldots,a_{k})\in\mathbb{R}^{k}$\emph{, where
$k\geq1$.}\textup{}\\
\textup{\emph{Output:}}\textup{ A }\emph{finite set }$Q\subseteq\mathbb{R}[x_{1},\ldots,x_{k}]$.
\begin{enumerate}
\item Put $Q=\emptyset$ and compute $R=\{p\in P\::\:\exists b\in\mathbb{R}\: p(\overline{a},b)=0\}$.
\item For $1\leq i\leq m$ do

\begin{enumerate}
\item Let $p_{i}=q_{d}x_{k+1}^{d}+\ldots+q_{0}$. Put $Q=Q\cup\{q_{d}\}$.
\item If $k>1$ and $q_{d}(\overline{a})=\ldots=q_{0}(\overline{a})=0$
put \[
Q=Q\cup\{q_{d-1},\ldots,q_{0}\}\]
 and continue the loop.
\item If $k>1$, $q_{d}(\overline{a})=0$, and none of $q_{d-1},\ldots,q_{0}$
is a nonzero constant, put $Q=Q\cup\{q_{l}\}$, where $l$ is maximal
such that $q_{l}(\overline{a})\neq0$.
\item Put $Q=Q\cup\{disc_{x_{k+1}}p_{i}\}$.
\item If $p_{i}\in R$ then put \[
Q=Q\cup\{res_{x_{k+1}}(p_{i},p_{j})\::\: i<j\leq m\wedge p_{j}\in R\}\]

\end{enumerate}
\item Return $Q$.
\end{enumerate}
\end{algorithm}
In the next algorithm we use the following notation.
\begin{notation}
Let $f,g\in\mathbb{R}[\overline{x}][x_{k+1}]$, $\overline{a}\in\mathbb{R}^{k}$,
and \[
d=\min(\deg(f),\deg(g))\]
If for some $0\leq l<d$, $psc_{0}(f,g)(\overline{a})=\ldots=psc_{l-1}(f,g)(\overline{a})=0$
and $psc_{l}(f,g)(\overline{a})\neq0$, then $PSC(f,g,\overline{a}):=\{psc_{0}(f,g),\ldots,psc_{l}(f,g)\}$.
Otherwise \[
PSC(f,g,\overline{a}):=\{psc_{0}(f,g),\ldots,psc_{d-1}(f,g)\}\]
\end{notation}
\begin{algorithm}
\label{alg:LProjH}(LProjH)\\
Input:\emph{ }$P=\{p_{1},\ldots,p_{m}\}\subseteq IRR_{k+1}$ \emph{and
}$\overline{a}=(a_{1},\ldots,a_{k})\in\mathbb{R}^{k}$\emph{, where
$k\geq1$.}\textup{}\\
\textup{\emph{Output:}}\textup{ A }\emph{finite set }$Q\subseteq\mathbb{R}[x_{1},\ldots,x_{k}]$. 
\begin{enumerate}
\item Put $Q=\emptyset$ and compute $R=\{p\in P\::\:\exists b\in\mathbb{R}\: p(\overline{a},b)=0\}$.
\item For $1\leq i\leq m$ do

\begin{enumerate}
\item Let $p_{i}=q_{d}x_{k+1}^{d}+\ldots+q_{0}$. Put $Q=Q\cup\{q_{d}\}$
and $r_{i}=p_{i}$.
\item If $q_{d}(\overline{a})=\ldots=q_{0}(\overline{a})=0$ put $Q=Q\cup\{q_{d-1},\ldots,q_{0}\}$
and continue the loop.
\item If $q_{d}(\overline{a})=0$, put $Q=Q\cup\{q_{d-1},\ldots,q_{l}\}$
and $r_{i}=q_{l}x_{x+1}^{l}+\ldots+q_{0}$, where $l$ is maximal
such that $q_{l}(\overline{a})\neq0$.
\item Put $Q=Q\cup PSC(r_{i},\frac{\partial r_{i}}{\partial x_{k+1}},\overline{a})$.
\item If $p_{i}\in R$ then for $i<j\leq m$ if $p_{j}\in R$ put $Q=Q\cup PSC(r_{i},p_{j},\overline{a})$.
\end{enumerate}
\item Return $Q$.
\end{enumerate}
\end{algorithm}
The following algorithm computes a local projection for given $P$
and $a$. 
\begin{algorithm}
\label{alg:LocalProj}(LocalProjection)\\
Input:\emph{ A finite set }$P\subseteq\mathbb{R}[x_{1},\ldots,x_{n}]$
\emph{and }$a=(a_{1},\ldots,a_{n-1})\in\mathbb{R}^{n-1}$\emph{, where
$n\geq1$.}\textup{}\\
\textup{\emph{Output:}}\textup{ A local projection sequence} $W=(W_{1},\ldots,W_{n})$
\emph{for $P$ at $a$.}\textup{ }
\begin{enumerate}
\item Set $wo=true$, $Q=P$, $k=n-1$.
\item While $k\geq1$ do

\begin{enumerate}
\item Let $\overline{a}=(a_{1},\ldots,a_{k})$ and compute $W_{k+1}=\overline{Q}\cap IRR_{k+1}$,
$Q=\overline{Q}\setminus W_{k+1}$.
\item If $wo=true$, $1<k<n-1$, and an element of $W_{k+1}$ is identically
zero at $\overline{a}$, then set $wo=false$, $Q=P$, $k=n-1$ and
continue the loop.
\item If $wo=true$ or $k\leq2$ set $Q=Q\cup LProjMC(W_{k+1},\overline{a})$
else set $Q=Q\cup LProjH(W_{k+1},\overline{a})$.
\item Set $k=k-1$.
\end{enumerate}
\item Set $W_{1}=\overline{Q}\cap IRR_{1}$.
\item Return $W=(W_{1},\ldots,W_{n})$.
\end{enumerate}
\end{algorithm}

\subsection{The CAD construction algorithm}

Let us first introduce an algorithm for evaluation of polynomial systems
at {}``partial'' sample points.
\begin{algorithm}
\label{alg:PEVAL}(PEval)\\
Input:\emph{ A system }$S(x_{1},\ldots,x_{n})$\emph{ of polynomial
equations and inequalities and $\overline{a}=(a_{1},\ldots,a_{k})\in\mathbb{R}^{k}$
with $0\leq k\leq n$.}\textup{}\\
\textup{\emph{Output:}}\textup{ $undecided$ or a pair $(v,P)$,
where $v\in\{true,false\}$, $P=\{p_{1},\ldots,p_{m}\}\subseteq\mathbb{R}[x_{1},\ldots,x_{k}]$,
and for any $b=(b_{1},\ldots,b_{n})\in\mathbb{R}^{n}$ if \[
sign(p_{i}(a_{1},\ldots,a_{k}))=sign(p_{i}(b_{1},\ldots,b_{k}))\]
 for all $1\leq i\leq m$ then the value of $S(b)$ is $v$. }
\begin{enumerate}
\item If $S=false$ or $S=true$ then return $(S,\emptyset)$.
\item If $S=(f\rho0)$, where $\rho$ is one of $<,\leq,\geq,>,=,$ or $\neq$.

\begin{enumerate}
\item If there exists a factor $g$ of $f$ such that $g\in\mathbb{R}[x_{1},\ldots,x_{k}]$
and $g(\overline{a})=0$ then return $(0\rho0,\{g\})$.
\item If $f\in\mathbb{R}[x_{1},\ldots,x_{k}]$ return $(f(\overline{a})\rho0,\{f\})$.
\item Return $undecided$.
\end{enumerate}
\item If $S=T_{1}\wedge\ldots\wedge T_{l}$

\begin{enumerate}
\item For $1\leq i\leq l$ compute $e_{i}=PEval(T_{i},\overline{a})$.
\item If for some $i$ $e_{i}=(false,P_{i})$ then return $(false,P_{i})$.
\item If for all $i$ $e_{i}=(true,P_{i})$ then return $(true,P_{1}\cup\ldots\cup P_{l})$.
\item Return $undecided$.
\end{enumerate}
\item If $S=T_{1}\vee\ldots\vee T_{l}$

\begin{enumerate}
\item For $1\leq i\leq l$ compute $e_{i}=PEval(T_{i},\overline{a})$.
\item If for some $i$ $e_{i}=(true,P_{i})$ then return $(true,P_{i})$.
\item If for all $i$ $e_{i}=(false,P_{i})$ then return $(false,P_{1}\cup\ldots\cup P_{l})$.
\item Return $undecided$.
\end{enumerate}
\end{enumerate}
\end{algorithm}
We can now present a recursive algorithm computing cylindrical algebraic
decomposition using local projections.
\begin{algorithm}
\label{alg:LPCAD}(LPCAD)\\
Input:\emph{ A system $S(x_{1},\ldots,x_{n})$} \emph{of polynomial
equations and inequalities and $\overline{a}=(a_{1},\ldots,a_{k})\in\mathbb{R}^{k}$
with $0\leq k<n$.}\textup{}\\
\textup{\emph{Output:}}\textup{ }\emph{A pair $(F,V)$, where $F$
is a level-$k+1$ cylindrical subformula, }\textup{$V=(V_{1},\ldots,V_{k})$}\emph{,
}\textup{$V_{j}\subseteq\mathbb{R}[x_{1},\ldots,x_{j}]$}\emph{ for
$1\leq j\leq k$, and for any cell $C\subseteq\mathbb{R}^{k}$ if
$\overline{a}\in C$ and for $1\leq j\leq k$ all elements of $V_{j}$
have constant signs on $\Pi_{j}(C)$ then}\textup{\emph{ }}\textup{\[
(x_{1},\ldots,x_{k})\in C\Rightarrow\left(F(x_{1},\ldots,x_{n})\Longleftrightarrow S(x_{1},\ldots,x_{n})\right)\]
}
\begin{enumerate}
\item Compute a disjunctive normal form $S_{DNF}$ and a conjunctive normal
form $S_{CNF}$ of $S$.
\item Set $stack=\{(-\infty,-\infty,<,\infty,\infty,<)\}$ and $A=Q=V_{1}=\ldots=V_{k}=\emptyset$.
\item While $stack\neq\emptyset$ do

\begin{enumerate}
\item Remove a tuple $(u_{1},r_{1},\rho_{1},u_{2},r_{2},\rho_{2})$ from
$stack$. $r_{1},r_{2}$ are algebraic functions of $x_{1},\ldots x_{k}$,
$-\infty$, or $\infty$, $u_{1}=r_{1}(\overline{a})$, $u_{2}=r_{2}(\overline{a})$,
$\rho_{1},\rho_{2}\in\{<,\leq\}$, and the tuple represents the interval
$u_{1}\rho_{1}x_{k+1}\rho_{2}u_{2}$,
\item If $u_{1}=u_{2}$ set $a_{k+1}=u_{1}$ and set $R=\{f\}$, where $r_{1}=Root_{x_{k+1,p}}f$,
else pick a rational number $u_{1}<a_{k+1}<u_{2}$ and set $R=\emptyset$.
Set $\overline{b}=(\overline{a},a_{k+1})$.
\item Compute $e_{CNF}=PEval(S_{CNF},\overline{b})$. If $e_{CNF}=(false,P)$
then set $H=false$ and $W=LocalProjection(P\cup R,\overline{a})$,
and go to $(f)$.
\item Compute $e_{DNF}=PEval(S_{DNF},\overline{b})$. If $e_{DNF}=(true,P)$
then set $H=true$ and $W=LocalProjection(P\cup R,\overline{a})$,
and go to $(f)$.
\item Compute $(H,U)=LPCAD(S,\overline{b})$. For $1\leq j\leq k$ set $V_{j}=V_{j}\cup U_{j}$.
Compute $W=LocalProjection(U_{k+1}\cup R,\overline{a})$.
\item For $1\leq j\leq k$ set $V_{j}=V_{j}\cup W_{j}$.
\item If $u_{1}=u_{2}$ then set $G=(x_{k+1}=r_{1})$ and go to $(n)$.
\item Find $s_{1}$and $s_{2}$ such that 

\begin{enumerate}
\item $s_{1}=Root_{x_{k+1},p_{1}}f_{1}$ and $f_{1}\in W_{k+1}$ or $s_{1}=f_{1}\equiv-\infty$, 
\item $s_{2}=Root_{x_{k+1},p_{2}}f_{2}$ and $f_{2}\in W_{k+1}$ or $s_{2}=f_{2}\equiv\infty$, 
\item $v_{1}=s_{1}(\overline{a})$ and $v_{2}=s_{2}(\overline{a})$, 
\item either $v_{1}=v_{2}=a_{k+1}$ or $v_{1}<a_{k+1}<v_{2}$ and there
are no roots of elements of $W_{k+1}$ in $(v_{1},v_{2})$. 
\end{enumerate}
\item Set $Q=Q\cup(\{f_{1},f_{2}\}\setminus\{-\infty,\infty\})$.
\item If $v_{1}=v_{2}$ then set $G=(x_{k+1}=s_{1})$, add \[
(v_{1},s_{1},<,u_{2},r_{2},\rho_{2})\]
 and \[
(u_{1},r_{1},\rho_{1},v_{1},s_{1},<)\]
 to $stack$, and go to $(n)$.
\item If $u_{2}<v_{2}$ then set $t_{2}=r_{2}$ and $\sigma_{2}=\rho_{2}$.
Else set $t_{2}=s_{2}$ and $\sigma_{2}=<$, and if $u_{2}>v_{2}$
or $\rho_{2}=\leq$ add \[
(v_{2},s_{2},\leq,u_{2},r_{2},\rho_{2})\]
 to $stack$. 
\item If $v_{1}<u_{1}$ then set $t_{1}=r_{1}$ and $\sigma_{1}=\rho_{1}$.
Else set $t_{1}=s_{1}$ and $\sigma_{1}=<$, and if $v_{1}>u_{1}$
or $\rho_{1}=\leq$ add \[
(u_{1},r_{1},\rho_{1},v_{1},s_{1},\leq)\]
 to $stack$. 
\item Set $G=(t_{1}\sigma_{1}x_{k+1}\sigma_{2}t_{2})$.
\item Set $A=A\cup\{(a_{k+1},G\wedge H)\}$
\end{enumerate}
\item Sort A by increasing values of the first element, obtaining $\{(c_{1},H_{1}),\ldots,(c_{m},H_{m})\}$.
Set $F=H_{1}\vee\ldots\vee H_{m}$.
\item Compute $W=LocalProjection(Q,\overline{a})$.
\item For $1\leq j\leq k$ set $V_{j}=V_{j}\cup W_{j}$.
\item Return\emph{ $(F,V)$.}
\end{enumerate}
\end{algorithm}
\begin{cor}
$LPCAD(S(x_{1},\ldots,x_{n}),())$ returns \[
(F(x_{1},\ldots,x_{n}),())\]
where $F(x_{1},\ldots,x_{n})$ is a cylindrical algebraic formula
equivalent to $S(x_{1},\ldots,x_{n})$.
\end{cor}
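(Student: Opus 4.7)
The plan is to deduce the corollary from the stated input/output specification of Algorithm~\ref{alg:LPCAD} applied at $k=0$, together with a verification that the resulting level-$1$ cylindrical subformula is a CAF in the strict sense. First I would observe that when $k=0$ the tuple $\overline{a}$ is $()\in\mathbb{R}^{0}$ and the $V$-component of the output is the empty tuple, so the output pair has the form $(F,())$ with $F$ a level-$1$ cylindrical subformula. The only cell $C\subseteq\mathbb{R}^{0}$ is $\{()\}$, which contains $\overline{a}$, and the hypothesis on $V_{j}$'s is vacuous (the index range $1\le j\le 0$ is empty). Consequently the specification yields
\[
F(x_{1},\ldots,x_{n})\Longleftrightarrow S(x_{1},\ldots,x_{n})
\]
for every $(x_{1},\ldots,x_{n})\in\mathbb{R}^{n}$.

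Next I would check that $F$ really is a CAF, not merely a level-$1$ cylindrical subformula. By construction $F=H_{1}\vee\ldots\vee H_{m}$ with each $H_{i}=G_{i}\wedge\widetilde{H}_{i}$, where $G_{i}$ is a level-$1$ algebraic constraint and $\widetilde{H}_{i}$ is a level-$2$ cylindrical subformula produced by a recursive call on $(S,(a_{1}))$; unfolding this recursion shows $F$ is a level-$1$ cylindrical subformula whose DNF consists of conjunctions of the form $F_{1}\wedge\ldots\wedge F_{n}$ with $F_{k}$ a level-$k$ algebraic constraint. To qualify as an atomic CAF I must verify that each $F_{k+1}$ is regular on the solution set of $F_{1}\wedge\ldots\wedge F_{k}$.

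The regularity is what the $V$-component and the local projection operator are designed to enforce, and the natural strategy is an induction on $n-k$ establishing that on any cell $C$ containing $\overline{a}$ on which the returned $V_{j}$'s have constant signs, every algebraic function appearing as a bound in $F$ is regular on the portion of $C\times\mathbb{R}^{n-k}$ carved out by the preceding constraints. The inductive step relies on two uses of \textsc{LocalProjection}: the invocation at step~3(e)--(f) that witnesses delineability of the recursive output's projection polynomials on $C$, and the invocation at steps~5--6 that, via the accumulated set $Q$, witnesses delineability of the polynomials whose roots appear as bounds of the $G_{i}$. Specialising to $k=0$, the cell $C=\{()\}$ trivially satisfies any constant-sign condition, so the regularity propagates all the way down and every atomic conjunct of $\mathrm{DNF}(F)$ is an atomic CAF.

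The main obstacle, which is really the content of the correctness theorem for Algorithm~\ref{alg:LPCAD} rather than of the corollary itself, is verifying this induction hypothesis in detail. One has to check (i) that the loop invariant of step~3 maintains that the union of processed intervals together with the intervals still on the stack exhausts $(-\infty,\infty)$ at the sample point $\overline{a}$, so that the resulting $G_{i}$'s really partition $\mathbb{R}$ on every cell where $V$ has constant signs; (ii) that the choice of bounding root functions $s_{1},s_{2}$ in step~3(h) combined with delineability of $W_{k+1}$ guarantees the sign constancy of the recursive polynomials $P$ on $\{(\overline{x},x_{k+1})\colon\overline{x}\in C\wedge G_{i}(\overline{x},x_{k+1})\}$, so that $\widetilde{H}_{i}\Leftrightarrow S$ there; and (iii) that \textsc{PEval} is sound in the sense claimed in its output specification, so that whenever it returns $(v,P)$ the constant signs of $P$ on $C$ force $S$ to take value $v$ on the corresponding tube. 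Granting (i)--(iii) the induction closes, and applying the resulting invariant at $k=0$ gives the corollary.
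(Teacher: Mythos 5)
Your proposal takes the same route the paper implicitly relies on: the corollary is stated without a separate proof and is treated as an instance of the correctness specification of Algorithm~\ref{alg:LPCAD} at $k=0$, where the only cell $C\subseteq\mathbb{R}^{0}$ is $\{()\}$ and the $V$-condition is vacuous, so the equivalence $F\Longleftrightarrow S$ holds on all of $\mathbb{R}^{n}$. Your additional care in checking that $F$ is genuinely a CAF (regularity of each level-$(k+1)$ constraint on the solution set of the preceding ones) is exactly the content that the paper's proof of the Proposition establishes via the delineability of $Q$ and the $W_{k+1}^{*}$-section/sector structure of the sets $Z_{2}(\alpha)$; the paper just does not restate it at the corollary. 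The one point you overstate is that $F$ is a CAF \emph{in the strict sense}: as the paper itself notes immediately after the corollary, the returned formula may contain weak inequalities $\leq$, which are not permitted in algebraic constraints as defined, and must first be rewritten as disjunctions of equations and strict inequalities. So the corollary should be read modulo that routine conversion, and your claim that ``every atomic conjunct of $\mathrm{DNF}(F)$ is an atomic CAF'' needs that caveat. Otherwise the argument, including your three verification points (i)--(iii), matches what the paper's Proposition proof carries out in detail.
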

The formula returned by Algorithm \ref{alg:LPCAD} may involve weak
inequalities, but it can be easily converted to the CAF format by
replacing weak inequalities with disjunctions of equations and strict
inequalities.

\subsection{Proofs}

To prove correctness of Algorithm \ref{alg:LocalProj} we use the
following lemmata.
\begin{lem}
\label{lem:LProjMC}Let\emph{ $k\geq1$}\textup{\emph{,}} $P\subseteq IRR_{k+1}$\emph{,
}$\overline{a}=(a_{1},\ldots,a_{k})\in\mathbb{R}^{k}$\emph{, }\textup{\emph{and
$Q=LProjMC(P,\overline{a})$. If $D$ is}} a connected analytic submanifold
of $\mathbb{R}^{k}$ such that $\overline{a}\in D$ and all elements
of $Q$ are order-invariant in $D$ then the set $P^{*}$ of all elements
of $P$ that are not identically zero on $D\times\mathbb{R}$ is analytic
delineable over $D$ and the elements of $P^{*}$ are order-invariant
in each $P^{*}$-section over $D$. \end{lem}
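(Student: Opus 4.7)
The plan is to reduce the statement to McCallum's classical projection theorem \cite{MC1,MC2}: if the full McCallum projection of $P$ — all coefficients, discriminants, and pairwise resultants — is order-invariant on a connected analytic submanifold $D$ and no $p\in P$ vanishes identically on $D\times\mathbb{R}$, then $P$ is analytic delineable over $D$ with each $p$ order-invariant in its sections. Since $Q = LProjMC(P,\overline{a})$ is strictly smaller than McCallum's full projection, the actual work is to show that each omitted projection polynomial either corresponds to a $p_i$ that drops out of $P^{*}$ or is automatically controlled on $D$ by the information available at the sample point $\overline{a}$.

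The enabling observation is that on the connected set $D$ each $q\in Q$ is either identically zero on $D$ or nowhere zero on $D$, with its global behavior determined by $q(\overline{a})$. I would then classify each $p_i$ by which branch of the inner loop it triggers. Branch~(b) forces every coefficient of $p_i$ into $Q$ with zero value at $\overline{a}$, so each coefficient vanishes identically on $D$ and hence $p_i\notin P^{*}$. When only branches (a), (d), (e) fire, the leading coefficient $q_d\in Q$ is nonzero at $\overline{a}$ and therefore throughout $D$, so $p_i$ has constant degree $d$ on $D$ and McCallum applies directly via $\mathrm{disc}_{x_{k+1}}p_i$ and the added resultants. For branch~(c) I would introduce the effective polynomial $\tilde p_i = q_l x_{k+1}^l + \ldots + q_0$, where $q_l\in Q$ is nowhere zero on $D$, and use a Sylvester-matrix computation to identify $\mathrm{disc}_{x_{k+1}} p_i$ on $D$ with $\mathrm{disc}_{x_{k+1}} \tilde p_i$ up to a nowhere-vanishing factor; combined with the order-invariance of $\mathrm{disc}_{x_{k+1}} p_i \in Q$, this transfers McCallum's conclusion from $\tilde p_i$ to the restriction of $p_i$ to $D\times\mathbb{R}$, giving analytic delineability of $\{p_i\}$ and order-invariance on its sections. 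For step~(e), if $p_j\notin R$ then $p_j(\overline{a},b)\neq 0$ for all real $b$, so the real root functions of $p_i$ and $p_j$ are disjoint on a neighborhood of $\overline{a}$ in $D$; the connectedness of $D$ and the order-invariance of individual discriminants in $Q$ propagate this disjointness to all of $D$, justifying the restriction of resultant computation to pairs in $R$.

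The main obstacle is branch~(c): the intermediate coefficients $q_{d-1},\ldots,q_{l+1}$ all vanish at $\overline{a}$ by maximality of $l$ but are not placed in $Q$, so order-invariance does not a priori control them on $D$. Making rigorous the claim that $p_i$ and $\tilde p_i$ define the same real locus on $D\times\mathbb{R}$, and hence that analytic delineability of $\tilde p_i$ transfers to $p_i$, is the most delicate step; I expect to handle it by combining the order-invariance of $\mathrm{disc}_{x_{k+1}} p_i$ and $q_l$ in $Q$ with the connectedness of $D$ and an analysis of which subresultants become principal when the top coefficients of $p_i$ are specialized to zero. A secondary subtlety is guaranteeing that the analytic continuation argument in step~(e) does not cross a singular locus: this is handled by the fact that the order-invariance hypothesis on $Q$ forbids any discriminant or resultant of elements of $P$ in $Q$ from changing sign on $D$, so no new incidences among root functions can arise away from $\overline{a}$.
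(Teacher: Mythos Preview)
Your handling of branches (2a), (2b), (2d) and the restriction to $R$ in step (2e) is essentially the same as the paper's. The divergence---and the genuine gap---is in branch (2c).

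Your plan there is to pass to the truncation $\tilde p_i=q_lx_{k+1}^l+\cdots+q_0$ and argue that $\mathrm{disc}_{x_{k+1}}p_i$ and $\mathrm{disc}_{x_{k+1}}\tilde p_i$ agree on $D$ up to a unit, so that McCallum's theorem applied to $\tilde p_i$ transfers back to $p_i$. This does not work as stated: the intermediate coefficients $q_{d-1},\ldots,q_{l+1}$ vanish at $\bar a$ but are \emph{not} in $Q$, so order-invariance of $Q$ says nothing about them on $D$, and in general $p_i$ and $\tilde p_i$ do not coincide on $D\times\mathbb{R}$. The Sylvester-matrix identification you invoke only holds after those intermediate coefficients are specialised to zero, which is exactly what you cannot assume. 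The ``subresultant salvage'' you sketch does not close this either: once some $q_j$ with $l<j<d$ is nonzero at a point of $D$, the Sylvester matrices for $p_i$ and $\tilde p_i$ are genuinely different there, and no formal identity relates their principal subresultant coefficients.

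The paper avoids truncation altogether. It uses step (2c) only to guarantee that $p_i$ does not vanish identically at any point of $D$ (since $q_l\in Q$ is nonzero at $\bar a$, hence on all of $D$). Together with sign-invariance of the leading coefficient $q_d\in Q$, it invokes Theorem~3.1 of Brown~\cite{B} to conclude directly that $p_i$ is \emph{degree-invariant} on $D$. With degree-invariance in hand, McCallum's Theorem~2~\cite{MC2} applies to $p_i$ itself via the order-invariant $\mathrm{disc}_{x_{k+1}}p_i\in Q$, giving analytic delineability and order-invariance on sections with no passage to a reduced polynomial. The missing ingredient in your argument is precisely this degree-invariance result of Brown; once you cite it, branch (2c) reduces to the same two-line argument as the nondegenerate case, and the truncation $\tilde p_i$ never enters.
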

\begin{proof}
Suppose that $f\in P^{*}$. Step $(2a)$ of Algorithm \ref{alg:LProjMC}
guarantees that $f$ has a sign-invariant leading coefficient in $D$.
$f$ does not vanish identically at any point in $D$ (for $k>1$
it is ensured by step $(2c)$; for $k=1$ it follows from irreducibility
of $f$). By Theorem 3.1 of \cite{B}, $f$ is degree-invariant on
$D$. Since $disc_{x_{k+1}}(f)\in Q$, by Theorem 2 of \cite{MC2},
$\{f\}$ is analytic delineable over $D$ and is order-invariant in
each $\{f\}$-section over $D$. Suppose that $g\in P^{*}$ and $g\neq f$.
If either $f(\bar{a},x_{k+1})$ or $g(\bar{a},x_{k+1})$ has no real
roots then $\{f,g\}$ is delineable on $D$. Otherwise $res_{x_{k+1}}(f,g)\in Q$
and hence, by Theorem 2 of \cite{MC2}, $\{f,g\}$ is analytic delineable
over $D$. Therefore, $P^{*}$ is analytic delineable over $D$ and
the elements of $P^{*}$ are order-invariant in each $P^{*}$-section
over $D$. \end{proof}
\begin{lem}
\label{lem:LProjH}Let\emph{ $k\geq1$}\textup{\emph{,}} $P\subseteq IRR_{k+1}$\emph{,
}$\overline{a}=(a_{1},\ldots,a_{k})\in\mathbb{R}^{k}$\emph{, }\textup{\emph{and
$Q=LProjH(P,\overline{a})$. If $D$ is}} a connected subset of $\mathbb{R}^{k}$
such that $\overline{a}\in D$ and all elements of $Q$ are sign-invariant
in $D$ then the set $P^{*}$ of all elements of $P$ that are not
identically zero on $D\times\mathbb{R}$ is delineable over $D$. \end{lem}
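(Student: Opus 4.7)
The plan is to parallel the proof of Lemma~\ref{lem:LProjMC}, but replace appeals to McCallum's discriminant/resultant theorem by Hong's subresultant-based projection theorem. The conclusion is correspondingly weaker --- delineability only, without analyticity or order-invariance in sections --- which matches the weaker hypothesis that elements of $Q$ be only sign-invariant on $D$. Throughout I use the elementary observation that if a polynomial in $Q$ vanishes at $\bar{a}$ and is sign-invariant on the connected set $D$, then it vanishes identically on $D$.

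Fix $f = p_i \in P^*$. First I would argue, using steps $(2a)$--$(2c)$ of Algorithm~\ref{alg:LProjH}, that $f$ has constant $x_{k+1}$-degree on $D$. Step $(2b)$ ensures that if every coefficient $q_j$ of $p_i$ vanishes at $\bar{a}$, then all of them vanish identically on $D$, which is impossible since $f \in P^*$. Otherwise let $l$ be maximal with $q_l(\bar{a}) \neq 0$; step $(2c)$ places $q_d, \ldots, q_{l+1}$ in $Q$, so these vanish identically on $D$, while $q_l$ is sign-invariantly nonzero on $D$. Hence $f$ and the truncated polynomial $r_i = q_l x_{k+1}^l + \cdots + q_0$ agree on $D \times \mathbb{R}$, and $\deg_{x_{k+1}} r_i = l$ throughout $D$.

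Next I would show each $\{r_i\}$ is delineable on $D$. By the definition of $PSC(r_i, \partial r_i/\partial x_{k+1}, \bar{a})$ added in step $(2d)$, either all of $psc_0, \ldots, psc_{l-1}$ are included in $Q$, or the inclusion stops at the first index $m$ with $psc_m(\bar{a}) \neq 0$; in the latter case sign-invariance forces $psc_0, \ldots, psc_{m-1}$ to vanish identically on $D$ and $psc_m$ to be nonzero on $D$. Either way, the degree of $\gcd(r_i(\bar{x}, \cdot), \partial_{x_{k+1}} r_i(\bar{x}, \cdot))$ is constant for $\bar{x} \in D$, and hence so is the number of distinct real roots of $r_i(\bar{x}, x_{k+1})$. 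Hong's subresultant delineability theorem \cite{H} then yields delineability of $\{r_i\}$. A parallel argument with the PSCs added in step $(2e)$ handles each pair $\{r_i, r_j\}$ when both $p_i, p_j \in R$. If instead $p_i \notin R$, then $p_i(\bar{a}, x_{k+1})$ has no real roots, and the already-established delineability of $\{r_i\}$ extends this to all of $D$, making any pair containing $r_i$ trivially delineable. Combining the individual and pairwise conclusions yields delineability of $P^*$ on $D$.

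The main obstacle is justifying the use of the truncated subresultant set $PSC(f, g, \bar{a})$ in place of the full principal subresultant sequence. Hong's theorem is traditionally stated for sign-invariance of the entire initial segment of PSCs, whereas our definition includes only the segment up to the first index nonvanishing at $\bar{a}$. The sign-invariance hypothesis on $Q$ is exactly what bridges this gap: once some $psc_m(\bar{a}) \neq 0$, the index of the first nonvanishing PSC on $D$ coincides with $m$, so the higher subresultants are irrelevant for determining the $\gcd$-degree along the fibres. Once this point is settled, the remainder of the proof is a direct transcription of the argument for Lemma~\ref{lem:LProjMC}.
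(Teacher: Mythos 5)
Your proof is correct and follows essentially the same route as the paper's: truncate each $p_{i}$ to $r_{i}$ by discarding coefficients that vanish identically on $D$, use the principal subresultant coefficients collected in steps $(2d)$ and $(2e)$ together with sign-invariance to show the relevant $\gcd$ degrees along fibres are constant, and combine this with degree-invariance to obtain delineability of singletons and pairs. The only cosmetic difference is that the paper explicitly cites Theorems 1--3 of \cite{C}, Lemma 1 of \cite{H}, and Lemma 12 of \cite{S9} for the steps from ``constant $\gcd$ degree plus degree-invariance'' to ``delineable,'' which you attribute more loosely to ``Hong's subresultant delineability theorem.''
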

\begin{proof}
Suppose that $f=q_{d}x_{k+1}^{d}+\ldots+q_{0}\in P^{*}$. Let $l$
be maximal such that $q_{l}(\overline{a})\neq0$, and let $f_{red}=q_{l}x_{k+1}^{l}+\ldots+q_{0}$.
Steps $(2a)$ and $(2c)$ of Algorithm \ref{alg:LProjH} guarantee
that $f=f_{red}$ in $D\times\mathbb{R}$. By step $(2d)$ and Theorems
1-3 of \cite{C}, $\{f_{red}\}$ is delineable over $D$, and hence
$\{f\}$ is delineable over $D$. Suppose that $g\in P^{*}$ and $g\neq f$.
If either $f(\bar{a},x_{k+1})$ or $g(\bar{a},x_{k+1})$ has no real
roots then $\{f,g\}$ is delineable on $D$. Otherwise without loss
of generality we may assume that due to step $(2e)$ $Q$ contains
all factors of $PSC(f_{red},g,\overline{a})$. By Lemma 1 of \cite{H}
and Theorem 2 of \cite{C}, the degree of $\gcd(f(\bar{b},x_{k+1}),g(\bar{b},x_{k+1}))$
is constant for $\bar{b}\in D$. Since $f$ and $g$ are degree-invariant
in $D$, by Lemma 12 of \cite{S9}, $\{f,g\}$ is delineable over
$D$. Therefore $P^{*}$ is delineable over $D$. \end{proof}
\begin{prop}
\label{pro:LocalProj}Algorithm \ref{alg:LocalProj} terminates and
returns a local projection sequence for $P$ at $a$.\textup{\emph{ }}\end{prop}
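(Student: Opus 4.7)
The plan is to treat termination and correctness separately. Termination is quick: the flag $wo$ can flip from true to false at most once (step~(b) fires only when $wo = \mathit{true}$ and immediately sets $wo = \mathit{false}$), and between flips the loop variable $k$ strictly decreases from $n-1$ to $1$. Each iteration performs only finitely many operations --- factorization, set manipulation, and one call to $LProjMC$ or $LProjH$ (both manifestly finite) --- so the algorithm halts after at most $2(n-1)$ outer iterations.

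For correctness, the shape requirements on $W$ (finiteness, $W_k \subseteq IRR_k$, and $\overline{P} \cap IRR_k \subseteq W_k$) are immediate from the construction: $Q$ starts as $P$, is only ever replaced by its irreducible factors, and at the iteration with parameter $k$ every level-$(k+1)$ factor currently in $Q$ is extracted into $W_{k+1}$. The substantive condition is the delineability requirement. Fix $1 \le k < n$ and a cell $C \subseteq \mathbb{R}^k$ containing $\overline{a}_k := (a_1, \ldots, a_k)$ on which every element of every $W_j$ with $j \le k$ has constant sign on $\Pi_j(C)$, and write $W_{k+1}^{\ast}$ for the set of elements of $W_{k+1}$ not identically zero on $C \times \mathbb{R}$. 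In the successful pass of the outer loop, the iteration at parameter $k$ adjoins $Q' = LProjMC(W_{k+1}, \overline{a}_k)$ or $Q' = LProjH(W_{k+1}, \overline{a}_k)$ to $Q$, and the irreducible factors of elements of $Q'$ subsequently populate $W_j$ for various $j \le k$. By hypothesis each such factor is sign-invariant on $\Pi_j(C)$, hence on $C$, so every element of $Q'$ is sign-invariant on $C$. Since $C$ is a cell, hence a connected analytic submanifold of $\mathbb{R}^k$, a sign-invariant polynomial on $C$ is either nowhere zero or identically zero on $C$, so automatically order-invariant. In the $LProjH$ case, Lemma~\ref{lem:LProjH} with $D = C$ yields delineability of $W_{k+1}^{\ast}$; in the $LProjMC$ case, Lemma~\ref{lem:LProjMC} does.

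The main obstacle is the $LProjMC$ case, because the proof of Lemma~\ref{lem:LProjMC} needs each $f \in W_{k+1}^{\ast}$ not to vanish identically at any point of $C$, which in particular forbids nullification of $f$ at $\overline{a}_k$. I plan to exclude nullification case by case. For $1 < k < n-1$, the well-oriented check in step~(b) guarantees no element of $W_{k+1}$ is nullified at $\overline{a}_k$ in a successful $LProjMC$ pass, else the algorithm would have restarted with $wo = \mathit{false}$. For $k = 1$, an $f \in W_2 \subseteq IRR_2$ nullified at $a_1$ would force the minimal polynomial of $a_1$ over $K$ to divide $f$ in $K[x_1, x_2]$, contradicting irreducibility of $f$ together with $f \in IRR_2$. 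For the remaining cases (e.g.\ $k = n-1$, where the check does not apply), if $f \in W_{k+1}$ were nullified at $\overline{a}_k$ then step~(2b) of $LProjMC$ would add every coefficient of $f$ to $Q$; by the preceding paragraph these coefficients are sign-invariant on $C$ and vanish at $\overline{a}_k \in C$, hence on all of $C$, forcing $f \equiv 0$ on $C \times \mathbb{R}$ and thus $f \notin W_{k+1}^{\ast}$. With nullification excluded for elements of $W_{k+1}^{\ast}$, the hypotheses of Lemma~\ref{lem:LProjMC} are met and the argument concludes.
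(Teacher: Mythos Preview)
Your termination argument, the shape conditions, and the $LProjH$ case are fine, but the $LProjMC$ case has a real gap. The step ``a sign-invariant polynomial on $C$ is either nowhere zero or identically zero on $C$, so automatically order-invariant'' is false: identically zero on $C$ does not imply order-invariant on $C$. For instance, take the cell $C = \{(t,0,0) : -1 < t < 1\} \subseteq \mathbb{R}^3$ and the irreducible polynomial $g = x_1 x_2 + x_2^{2} + x_3^{2}$. Then $g$ vanishes identically on $C$ (hence is sign-invariant there), yet $g$ has order $1$ at $(t,0,0)$ for $t \neq 0$ and order $2$ at the origin. Since Lemma~\ref{lem:LProjMC} genuinely requires order-invariance of the projection set on $D$ --- this is precisely the content of McCallum's theorem, and the reason McCallum projection fails under mere sign-invariance --- you cannot invoke it with $D = C$ on the strength of sign-invariance alone. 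A secondary issue is that the paper's definition of cell only demands continuous algebraic bounds, so an arbitrary cell need not be an analytic submanifold (e.g.\ the graph of $Root_{y,1}(y^{2}-x_1^{2}) = -|x_1|$ over $(-1,1)$), which Lemma~\ref{lem:LProjMC} also requires.

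The paper's proof circumvents both problems by an inductive construction rather than working with an arbitrary $C$ directly. It builds a specific maximal cell $D$ level by level: $D_1$ is the maximal interval around $a_1$ on which $W_1$ is sign-invariant, and $D_{k}$ is the $W_k^{*}$-section or sector over $D_{k-1}$ containing $\Pi_k(a)$. The key point is that Lemma~\ref{lem:LProjMC} has a stronger conclusion than bare delineability: it also asserts that the elements of $W_{k}^{*}$ are \emph{order-invariant in each $W_k^{*}$-section} over $D_{k-1}$, and that the delineability is analytic. These conclusions at level $k-1$ are exactly what supply the hypotheses (analytic submanifold, order-invariance) at level $k$. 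Thus order-invariance of the projection factors on $D_k$ is established by this bootstrapping, not deduced from sign-invariance. Once delineability over $D_k$ is known, delineability over any connected $C \subseteq D_k$ follows by restriction, and maximality of $D_k$ forces $C \subseteq D_k$ for every cell $C$ satisfying the hypothesis of the definition.
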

\begin{proof}
To show that the algorithm terminates note that the body of the loop
in step $(2)$ is executed at most $2n-2$ times.

Let $W=(W_{1},\ldots,W_{n})$ be the returned sequence. Steps $(2a)$
and $(3)$ ensure that $W_{k}$ is a finite subset of $IRR_{k}$ and
$\overline{P}\cap IRR_{k}\subseteq W_{k}$ for $1\leq k\leq n$. We
will recursively construct a cell $D\subseteq\mathbb{R}^{n-1}$ such
that $D_{k}=\Pi_{k}(D)$ is the maximal connected set containing $\Pi_{k}(a)$
such that all elements of $W_{j}$ for $1\leq j\leq k$ have constant
signs on $\Pi_{j}(D_{k})$. Moreover, for $1\leq k<n$, the set $W_{k+1}^{*}$
of elements of $W_{k+1}$ that are not identically zero on $D_{k}\times\mathbb{R}$
is delineable over $D_{k}$. This is sufficient to prove that $W$
is a local projection sequence for $P$ at $a$, because for any cell
$C\subseteq\mathbb{R}^{k}$ if $(a_{1},\ldots,a_{k})\in C$ and all
elements of $W_{j}$ for $1\leq j\leq k$ have constant signs on $\Pi_{j}(C)$
then $C\subseteq D_{k}$, by maximality of $D_{k}$.

We will consider two cases depending on the value of $wo$ when the
algorithm terminated. Suppose first that when the algorithm terminated
$wo$ was $true$. In this case we will additionally prove that for
$1\leq k<n$ $D_{k}$ is an analytic submanifold of $\mathbb{R}^{k}$,
all elements of $W_{k}$ are order-invariant in $D_{k}$, and if $k<n-1$
then none of the elements of $W_{k+1}$ vanishes identically at any
point in $D_{k}$, $W_{k+1}$ is analytic delineable on $D_{k}$,
and the elements of $W_{k+1}$ are order-invariant in each $W_{k+1}$-section
over $D_{k}$. If $a_{1}$ is a root of an element of $W_{1}$ let
$D_{1}=\{a_{1}\}$ else let $D_{1}=(r_{1},s_{1})$, where $r_{1}$
and $s_{1}$ are roots of elements of $W_{1}$, $-\infty$, or $\infty$,
$r_{1}<a_{1}<s_{1}$, and there are no roots of $W_{1}$ in $(r_{1},s_{1})$.
$D_{1}$ is a connected analytic submanifold of $\mathbb{R}^{1}$
and all elements of $W_{1}$ are order-invariant in $D_{1}$. Since
the elements of $W_{2}$ are irreducible, none of the elements of
$W_{2}$ vanishes identically at any point in $D_{1}$. Since all
irreducible factors of elements of $LProjMC(W_{2},\Pi_{1}(a))$ belong
to $W_{1}$, by Lemma \ref{lem:LProjMC}, $W_{2}$ is analytic delineable
over $D_{1}$ and the elements of $W_{2}$ are order-invariant in
each $W_{2}$-section over $D_{1}$. Suppose that, for some $1<k<n-1$,
we have constructed $D_{k-1}$ satisfying the required conditions.
The conditions imply that $W_{k}$ is analytic delineable on $D_{k-1}$.
Let $D_{k}$ be the $W_{k}$-section or $W_{k}$-sector over $D_{k-1}$
which contains $\Pi_{k}(a)$. $D_{k}$ is an analytic submanifold
of $\mathbb{R}^{k}$. The elements of $W_{k}$ are order-invariant
in $D_{k}$, because they are order-invariant in each $W_{k}$-section
over $D_{k-1}$ and nonzero in each $W_{k}$-sector over $D_{k-1}$.
Since all irreducible factors of elements of $LProjMC(W_{k+1},\Pi_{k}(a))$
belong to $W_{1}\cup\ldots\cup W_{k}$ , by Lemma \ref{lem:LProjMC},
$W_{k+1}^{*}$ is analytic delineable over $D_{k}$ and the elements
of $W_{k+1}^{*}$ are order-invariant in each $W_{k+1}^{*}$-section
over $D_{k}$. Step $(2b)$ guarantees that if $k<n-1$ then $W_{k+1}^{*}=W_{k+1}$.

Suppose now that when the algorithm terminated $wo$ was $false$.
Let $D_{1}$ be as in the first part of the proof. As before, $W_{2}$
is analytic delineable over $D_{1}$ and the elements of $W_{2}$
are order-invariant in each $W_{2}$-section over $D_{1}$. Let $D_{2}$
be the $W_{2}$-section or $W_{2}$-sector over $D_{1}$ which contains
$(a_{1},a_{2})$. $D_{2}$ is an analytic submanifold of $\mathbb{R}^{2}$.
The elements of $W_{2}$ are order-invariant in $D_{2}$, because
they are order-invariant in each $W_{2}$-section over $D_{1}$ and
nonzero in each $W_{2}$-sector over $D_{1}$. Since all irreducible
factors of elements of $LProjMC(W_{3},\Pi_{2}(a))$ belong to $W_{1}\cup W_{2}$,
by Lemma \ref{lem:LProjMC}, $W_{3}^{*}$ is analytic delineable over
$D_{2}$. Suppose that, for some $2<k<n-1$, we have constructed $D_{k-1}$
satisfying the required conditions. The conditions on $D_{k-1}$ imply
that $W_{k}^{*}$ is delineable on $D_{k-1}$. Let $D_{k}$ be the
$W_{k}^{*}$-section or $W_{k}^{*}$-sector over $D_{k-1}$ which
contains $\Pi_{k}(a)$. All elements of $W_{k}$ are sign-invariant
in $D_{k}$. Since all irreducible factors of elements of $LProjH(W_{k+1},\Pi_{k}(a))$
belong to $W_{1}\cup\ldots\cup W_{k}$ , by Lemma \ref{lem:LProjH},
$W_{k+1}^{*}$ is delineable over $D_{k}$.

Since for $1\leq k<n$, $D_{k}$ is the $W_{k}^{*}$-section or $W_{k}^{*}$-sector
over $D_{k-1}$ which contains $\Pi_{k}(a)$, $D_{k}$ is the maximal
connected set containing $\Pi_{k}(a)$ such that all elements of $W_{j}$
for $1\leq j\leq k$ have constant signs on $D_{j}$. 
\end{proof}
Correctness and termination of Algorithm \ref{alg:PEVAL} is obvious.
\begin{prop}
Algorithm \ref{alg:LPCAD} terminates and the returned pair $(F,V)$
satisfies the required conditions.\end{prop}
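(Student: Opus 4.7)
The plan is to prove both termination and correctness by induction on $n-k$, the number of variables still to be quantified out. In the base case $k=n-1$, every sample point $\overline{b}=(\overline{a},a_{k+1})$ lies in $\mathbb{R}^{n}$, so $PEval(S_{CNF},\overline{b})$ and $PEval(S_{DNF},\overline{b})$ fully evaluate $S$ at $\overline{b}$; the recursive branch at step (3e) is never entered, and no nested invocation of the algorithm is needed. For $k<n-1$, the recursive call at step (3e) has input length $k+1$, so the inductive hypothesis applies to it and guarantees its termination and correctness.

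For termination of the main while loop in step (3), the key observation is that every polynomial that can appear in any $W_{k+1}$ component during the loop belongs to a finite set depending only on $S$ and $\overline{a}$: such polynomials are irreducible factors of outputs of $LProjMC$ or $LProjH$ applied to subsets of polynomials derived from $S$ by the local projection and evaluation machinery, and only finitely many such subsets can arise. Hence the set $R^{*}\subseteq\mathbb{R}$ of real roots in $x_{k+1}$ of these polynomials at $\overline{a}$ is finite. Every interval that is ever pushed onto the stack has endpoints in $R^{*}\cup\{-\infty,\infty\}$, and at each iteration the processed interval $I$ is strictly covered by the chosen cell $J$ plus the at most two new intervals added in (3k)--(3m), each of which contains strictly fewer points of $R^{*}$. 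Induction on the number of points of $R^{*}$ in $I$ then yields termination.

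For correctness, I would proceed by induction on $n-k$. For each iteration, Algorithm \ref{alg:PEVAL} guarantees that in the $false$/$true$ branches of (3c)--(3d) the returned set $P$ has the property that $S$ has a constant truth value $v$ on any point whose signs on $P$ match those at $\overline{b}$. In the recursive branch (3e), the inductive hypothesis says the returned $H$ is equivalent to $S$ on any cell $C'\subseteq\mathbb{R}^{k+1}$ with $\overline{b}\in C'$ and $U_{j}$ sign-invariant on $\Pi_{j}(C')$. In all three cases, Proposition \ref{pro:LocalProj} applied to the output of $LocalProjection(P\cup R,\overline{a})$ or $LocalProjection(U_{k+1}\cup R,\overline{a})$ tells us that $P$ (respectively $U_{k+1}$) is delineable over any cell $C\subseteq\mathbb{R}^{k}$ with $\overline{a}\in C$ and $W_{j}\subseteq V_{j}$ sign-invariant on $\Pi_{j}(C)$. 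Delineability then propagates the local truth value from the fiber $\{\overline{a}\}\times(v_{1},v_{2})$ (or the section $\{(\overline{a},a_{k+1})\}$ when $v_{1}=v_{2}$) to the entire sector or section over $C$ defined by $G$, so each conjunct $G_{i}\wedge H_{i}$ is equivalent to $S$ on the cell it describes.

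Finally, the concluding call $LocalProjection(Q,\overline{a})$ in step (5) and the absorption into $V$ in step (6) ensures that the algebraic functions $s_{1},s_{2}$ appearing as cell bounds in the $G_{i}$'s are jointly delineable over $C$, so $G_{1},\ldots,G_{m}$ partition $C\times\mathbb{R}$ cylindrically and $F=H_{1}\vee\ldots\vee H_{m}$ is equivalent to $S$ on $C\times\mathbb{R}$. The main obstacle I anticipate is the bookkeeping in (3h) and (3k)--(3m): one must verify that the tuples pushed onto the stack encode intervals whose adjacent endpoints come from a single algebraic function, so that successive cells in the resulting stack share boundary functions rather than merely boundary values at $\overline{a}$. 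The mechanism for this is the case analysis on whether $u_{2}<v_{2}$ (and symmetrically for the lower bound), which chooses $t_{i},\sigma_{i}$ from either the previous interval's $r_{i},\rho_{i}$ or the new $s_{i},<$; a careful invariant stating that the stack intervals together with already-emitted cells tile $C\times\mathbb{R}$ compatibly is the crux of the proof.
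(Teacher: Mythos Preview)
Your proposal is essentially the same approach as the paper's: termination via a global finiteness bound on the polynomials that can appear plus induction on the recursion depth, and correctness via the loop invariant that the stack intervals together with the already-emitted cells partition $C\times\mathbb{R}$, combined with delineability from Proposition~\ref{pro:LocalProj}. You have correctly identified the partition invariant as the crux and the (3k)--(3m) bookkeeping as the place where the work lies; the paper carries this out by defining sets $Z_1(\theta)$ for stack tuples and $Z_2(\alpha)$ for emitted cells and checking case by case that $\Omega_1\cup\Omega_2$ remains a partition.

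One point where the paper is sharper: for termination it does not argue abstractly that ``only finitely many such subsets can arise,'' but instead observes that every local projection set is contained in Hong's \emph{global} projection sequence $W_H=(W_{H,1},\ldots,W_{H,n})$ for the input polynomials (this uses the standing assumption that finite sets are consistently ordered, so that local $PSC$ choices are always initial segments of the global ones). This gives a single finite set, fixed in advance and independent of $\overline{a}$ and of the recursion, that bounds all polynomials appearing anywhere; in particular the $r_1,r_2$ in stack tuples are roots of elements of $W_{H,k+1}$, so the stack can hold only finitely many distinct tuples. Your metric ``number of points of $R^*$ in $I$'' almost works but not quite: when $u_2=v_2$ and $\rho_2={\leq}$, the pushed singleton $\{v_2\}$ can have the same $R^*$-count as $I$. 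You would need to refine the well-founded order (e.g.\ lexicographic on $R^*$-count and then on whether the interval is degenerate), whereas the Hong-projection bound sidesteps this entirely.
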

\begin{proof}
Let $P_{S}$ be the set of all polynomials that appear in $S$ and
let $W_{H}=(W_{H,1},\ldots,W_{H,n})$ be the Hong's projection sequence
\cite{H} for $P_{S}$ (the variant of given in Proposition 7 of \cite{S9}).
Suppose that $\overline{P}\subseteq W_{H,1}\cup\ldots\cup W_{H,k+1}$
and $\overline{a}\in\mathbb{R}^{k}$, where $k<n$. Let $(W_{1},\ldots,W_{k+1})=LocalProjection(P,\overline{a})$.
Since we assume that finite sets of polynomials are consistently ordered
according to a fixed linear order in the set of all polynomials, $W_{i}\subseteq W_{H,i}$
for $1\leq i\leq k+1$. Hence all polynomials that appear during execution
of $LPCAD$ are elements of $W_{H,1}\cup\ldots\cup W_{H,n}$. In particular,
$r_{1}$ and $r_{2}$ that appear in the elements of $stack$ are
roots of elements of $W_{H,k+1}$, $-\infty$, or $\infty$. Therefore,
the number of possible elements of $stack$ is finite, and hence the
loop in step $(3)$ terminates. Recursive calls to $TDCAD$ increment
$k$. When $k=n-1$ then either step $(3c)$ yields $H=false$ or
step $(3d)$ yields $H=true$, and hence step $(3e)$ containing the
recursive call to $LPCAD$ is never executed. Therefore the value
of $k$ is bounded by $n-1$, and hence the recursion terminates.

Let $(F,V)$ be the pair returned by $LPCAD$ and suppose that $C\subseteq\mathbb{R}^{k}$
is a cell such that $\overline{a}\in C$ and for $1\leq j\leq k$
all elements of $V_{j}$ have constant signs on $\Pi_{j}(C)$. We
need to show that\emph{ }\[
(x_{1},\ldots,x_{k})\in C\Rightarrow\left(F(x_{1},\ldots,x_{n})\Longleftrightarrow S(x_{1},\ldots,x_{n})\right)\]
Let $c=(c_{1},\ldots,c_{n})\in\mathbb{R}^{n}$ and $\bar{c}=(c_{1},\ldots,c_{k})\in C$.
We need to show that $F(c)=S(c)$. Let $W=LocalProjection(Q,\overline{a})$,
as computed in step $(5)$. All elements of $W_{j}$ have constant
signs on on $\Pi_{j}(C)$, for $1\leq j\leq k$. Since none of the
elements of $Q$ vanishes identically at $\overline{a}$, $Q$ is
delineable over $C$. Hence the $Q$-sections and the $Q$-sectors
over $C$ form a partition of $C\times\mathbb{R}$. 

For a tuple $\theta=(u_{1},r_{1},\rho_{1},u_{2},r_{2},\rho_{2})$
that appears on $stack$ in any iteration of the loop in step $(3)$
put\[
Z_{1}(\theta)=\{(\bar{x},x_{k+1})\in\mathbb{R}^{k+1}\::\:\bar{x}\in C\wedge r_{1}\rho_{1}x_{k+1}\rho_{2}r_{2}\}\]
For each $\alpha=(a_{k+1},G\wedge H)\in A$ put \[
Z_{2}(\alpha)=\{(\bar{x},x_{k+1})\in\mathbb{R}^{k+1}\::\:\bar{x}\in C\wedge G(\bar{x},x_{k+1})\}\]
Note that each $Z_{1}(\theta)$ and $Z_{2}(\alpha)$ is a union of
$Q$-sections and $Q$-sectors over $C$. Put $\Omega_{1}=\{Z_{1}(\theta)\::\,\theta\in stack\}$
and $\Omega_{2}=\{Z_{2}(\alpha)\::\,\alpha\in A\}$. We will show
that in each instance of the loop in step $(3)$ $\Omega_{1}\cup\Omega_{2}$
is a partition of $C\times\mathbb{R}$. In the first instance of the
loop in step $(3)$ $\Omega_{1}=\{C\times\mathbb{R}\}$ and $\Omega_{2}=\emptyset$,
and hence $\Omega_{1}\cup\Omega_{2}$ is a partition of $C\times\mathbb{R}$.
We will show that this property is preserved in each instance of the
loop. In each instance a tuple $\theta=(u_{1},r_{1},\rho_{1},u_{2},r_{2},\rho_{2})$
is removed from $stack$ and $\alpha=(a_{k+1},G\wedge H)$ is added
to $A$. If $u_{1}=u_{2}$ in step $(3g)$ then $Z_{2}(\alpha)=Z_{1}(\theta)$
and the property is preserved. If $v_{1}=v_{2}$ in step $(3j)$ then
$G=(x_{k+1}=s_{1})$ and tuples $\theta_{2}=(v_{1},s_{1},<,u_{2},r_{2},\rho_{2})$
and $\theta_{1}=(u_{1},r_{1},\rho_{1},v_{1},s_{1},<)$ are added to
$stack$. Since $\{Z_{1}(\theta_{1}),Z_{2}(\alpha),Z_{1}(\theta_{2})\}$
is a partition of $Z_{1}(\theta)$, the property is preserved. Otherwise
steps $(3k)$-$(3m)$ are executed. If in step $(3k)$ $u_{2}>v_{2}$
or $u_{2}=v_{2}$ and $\rho_{2}=\leq$ then put $Z_{1,2}=Z_{1}(\theta_{2})$,
where $\theta_{2}=(v_{2},s_{2},\leq,u_{2},r_{2},\rho_{2})$ is the
tuple added to $stack$, else put $Z_{1,2}=\emptyset$. If in step
$(3l)$ $v_{1}>u_{1}$ or $v_{1}=u_{1}$ and $\rho_{1}=\leq$ then
put $Z_{1,1}=Z_{1}(\theta_{1})$, where $\theta_{1}=(u_{1},r_{1},\rho_{1},v_{1},s_{1},\leq)$
is the tuple added to $stack$, else put $Z_{1,1}=\emptyset$. Since
$\{Z_{1,1},Z_{2}(\alpha),Z_{1,2}\}$ is a partition of $Z_{1}(\theta)$,
the property is preserved. 

After the loop in step $(3)$ is finished $stack$ is empty, $\Omega_{1}=\emptyset$,
and hence $\Omega_{2}$ is a partition of $C\times\mathbb{R}$. Let
$\alpha=(a_{k+1},G\wedge H)\in A$ be such that $(\bar{c},c_{k+1})\in Z_{2}(\alpha)$.
Let us analyze the instance of the loop in step $(3)$ which resulted
in adding $\alpha$ to $A$. Let $D=Z_{2}(\alpha)$. 

Suppose first that $H=false$ or $H=true$ was found in step $(3c)$
or $(3d)$. Let $W=LocalProjection(P\cup R,\overline{a})$, as computed
in step $(3c)$ or $(3d)$. For $1\leq j\leq k$, $W_{j}\subseteq V_{j}$,
and hence all elements of $W_{j}$ have constant signs on on $\Pi_{j}(D)$.
Therefore the set $W_{k+1}^{*}$ of elements of $W_{k+1}$ that are
not identically zero on $C\times\mathbb{R}$ is delineable over $C$.
By definition of $G$, $D$ is a $W_{k+1}^{*}$-section or a $W_{k+1}^{*}$-sector
over $C$. Hence all elements of $W_{k+1}$ have constant signs on
$D$. In particular, all elements of $P$ have constant signs on $D$,
and so $S(c)=H=F(c)$. 

Now suppose that $(H,U)=LPCAD(S,\overline{b})$ was computed in step
$(3e)$. Let \[
W=LocalProjection(U_{k+1}\cup R,\overline{a})\]
For $1\leq j\leq k$, $W_{j}\subseteq V_{j}$, and hence all elements
of $W_{j}$ have constant signs on on $\Pi_{j}(D)$. As before, $W_{k+1}^{*}$
is delineable over $C$, $D$ is a $W_{k+1}^{*}$-section or a $W_{k+1}^{*}$-sector
over $C$, and all elements of $W_{k+1}$ have constant signs on $D$.
In particular, all elements of $U_{k+1}$ have constant signs on $D$.
Since for $1\leq j\leq k$ $U_{j}\subseteq V_{j}$, all elements of
$U_{j}$ have constant signs on on $\Pi_{j}(D)$. Hence\[
(x_{1},\ldots,x_{k},x_{k+1})\in D\Rightarrow\left(H(x_{1},\ldots,x_{n})\Longleftrightarrow S(x_{1},\ldots,x_{n})\right)\]
and so $F(c)=H(c)=S(c)$.
\end{proof}

\subsection{Implementation remarks}
\begin{rem}
\label{rem:Zdim}The following somewhat technical improvements have
been observed to improve practical performance of Algorithm \ref{alg:LPCAD}.
\begin{enumerate}
\item In step $(2c)$ of Algorithm \ref{alg:LProjMC} in $q_{l}$ may be
chosen arbitrarily as long as $q_{l}(\overline{a})\neq0$, hence an
implementation may choose the simplest $q_{l}$.
\item If in a recursive call to $LPCAD(S,(a_{1},\ldots,a_{k}))$ the initial
coordinates $(a_{1},\ldots,a_{m})$ correspond to single-point intervals,
that is $u_{1}=u_{2}$ in step $(3b)$ of the currently evaluated
iteration of loop $(3)$ in all parent computations of \[
LPCAD(S,(a_{1},\ldots,a_{j}))\]
 for $1\leq j\leq m$, then $LocalProjection(P,(a_{1},\ldots,a_{k}))$
does not need to compute the last $m$ levels of projection. Instead
it can return $W=(W_{1},\ldots,W_{n})$ with $W_{1}=\ldots=W_{m}=\emptyset$.
\item Computations involved in finding projections are repeated multiple
times. A practical implementation needs to make extensive use of caching.
\end{enumerate}
\end{rem}

\subsection{\label{sub:Example}Example}

In this section we apply $LPCAD$ to solve the problem stated in Example
\ref{exa:MainExample}.

In step $(1)$ of $LPCAD(S,())$ we compute $S_{CNF}=(f_{1}<0\vee f_{2}\leq0)\wedge(f_{1}<0\vee f_{3}\leq0)$
and $S_{DNF}=f_{1}<0\vee(f_{2}\leq0\wedge f_{3}\leq0)$. In the first
iteration of loop $(3)$ we remove a tuple representing $-\infty<x<\infty$
from $stack$ and pick $a_{1}=0$. The calls to $PEval$ in steps
$(3c)$ and $(3d)$ yield $undecided$. Step $(3e)$ makes a recursive
call to $LPCAD(S,(0))$.

In the first iteration of loop $(3)$ in $LPCAD(S,(0))$ we remove
a tuple representing $-\infty<y<\infty$ from $stack$ and pick $a_{2}=0$.
$PEval(S_{CNF},(0,0))$ in step $(3c)$ yields $(true,\{f_{1},f_{2},f_{3}\})$.
We continue on to step $(3d)$ where $PEval(S_{DNF},(0,0))$ yields
$(true,\{f_{1}\})$. We set $H=true$ and compute \[
W=LocalProjection(\{f_{1}\},(0))=(W_{1},\{f_{1}\})\]
where $W_{1}=\{x-1,x+1\}$ is the set of factors of $discr_{y}f_{1}=16(x^{2}-1)$.
We go to step $(3f)$ and set $V_{1}=V_{1}\cup W_{1}=\{x-1,x+1\}$.
In step $(3h)$ we find $s_{1}=Root_{y,1}f_{1}=-2\sqrt{1-x^{2}}$,
$s_{2}=Root_{y,2}f_{1}=2\sqrt{1-x^{2}}$, $v_{1}=-2$, and $v_{2}=2$.
In step $(3i)$ we set $Q=Q\cup\{f_{1}\}=\{f_{1}\}$. In steps $(3k)$
and $(3l)$ we add tuples representing $2\leq y<\infty$ and $-\infty<y\leq-2$
to $stack$. In step $(3n)$ we obtain $A=\{(0,-2\sqrt{1-x^{2}}<y<2\sqrt{1-x^{2}})\}$.

In the second iteration of loop $(3)$ in $LPCAD(S,(0))$ we remove
a tuple representing $-\infty<y\leq-2$ from $stack$ and pick $a_{2}=-4$.
$PEval(S_{CNF},(0,-4))$ in step $(3c)$ yields $(false,\{f_{1},f_{2}\})$.
We set $H=false$ and compute \[
W=LocalProjection(\{f_{1},f_{2}\},(0))=(W_{1},\{f_{1},f_{2}\})\]
where $W_{1}=\{x-1,x+1\}$ is the set of factors of $discr_{y}f_{1}=16(x^{2}-1)$,
$discr_{y}f_{2}=4(x^{2}-1)$, and $res_{y}(f_{1},f_{2})=9(x^{2}-1)^{2}$.
We go to step $(3f)$ and set $V_{1}=V_{1}\cup W_{1}=\{x-1,x+1\}$.
In step $(3h)$ we find $s_{1}=v_{1}=-\infty$, $s_{2}=Root_{y,1}f_{1}=-2\sqrt{1-x^{2}}$,
and $v_{2}=-2$. In step $(3i)$ we set $Q=Q\cup\{f_{1}\}=\{f_{1}\}$.
In step $(3k)$ we add a tuple representing $y=-2$ to $stack$. In
step $(3n)$ we obtain $A=\{(0,-2\sqrt{1-x^{2}}<y<2\sqrt{1-x^{2}}),(-4,false)\}$.

In the third iteration of loop $(3)$ in $LPCAD(S,(0))$ we remove
a tuple representing $y=-2$ from $stack$ and set $a_{2}=-2$. $PEval(S_{CNF},(0,-2))$
in step $(3c)$ yields \[
(false,\{f_{1},f_{2}\})\]
We set $H=false$ and compute \[
W=LocalProjection(\{f_{1},f_{2}\},(0))=(W_{1},\{f_{1},f_{2}\})\]
where $W_{1}=\{x-1,x+1\}$. We go to step $(3f)$ and set $V_{1}=V_{1}\cup W_{1}=\{x-1,x+1\}$.
In step $(3g)$ we set $G=(y=-2\sqrt{1-x^{2}})$. In step $(3n)$
we obtain $A=\{(0,-2\sqrt{1-x^{2}}<y<2\sqrt{1-x^{2}}),(-4,false),(-2,false)\}$.

The remaining two iterations of loop $(3)$ look very similar to the
last two. In step $(4)$ we obtain $F=-2\sqrt{1-x^{2}}<y<2\sqrt{1-x^{2}}$.
In step $(5)$ we compute \[
W=LocalProjection(\{f_{1}\},(0))=(\{x-1,x+1\},\{f_{1}\})\]
 and in step $(6)$ we set $V_{1}=V_{1}\cup W_{1}=\{x-1,x+1\}$. The
returned value is $(-2\sqrt{1-x^{2}}<y<2\sqrt{1-x^{2}},(\{x-1,x+1\}))$.

In step $(3e)$ of $LPCAD(S,())$ we obtain $H=-2\sqrt{1-x^{2}}<y<2\sqrt{1-x^{2}}$
and $U=(\{x-1,x+1\})$. \[
LocalProjection(\{x-1,x+1\},())\]
 yields $(\{x-1,x+1\})$. In step $(3h)$ we find $s_{1}=Root_{x,1}(x+1)=-1$,
$s_{2}=Root_{x,1}(x-1)=1$, $v_{1}=-1$, and $v_{2}=1$. In steps
$(3k)$ and $(3l)$ we add tuples representing $1\leq x<\infty$ and
$-\infty<x\leq-1$ to $stack$. In step $(3n)$ we obtain $A=\{(0,-1<x<1\wedge-2\sqrt{1-x^{2}}<y<2\sqrt{1-x^{2}})\}$.

In the second iteration of loop $(3)$ in $LPCAD(S,())$ we remove
a tuple representing $-\infty<x\leq-1$ from $stack$ and pick $a_{1}=-2$.
The calls to $PEval$ in steps $(3c)$ and $(3d)$ yield $undecided$.
Step $(3e)$ makes a recursive call to $LPCAD(S,(-2))$.

In the first iteration of loop $(3)$ in $LPCAD(S,(-2))$ we remove
a tuple representing $-\infty<y<\infty$ from $stack$ and pick $a_{2}=0$.
$PEval(S_{CNF},(-2,0))$ in step $(3c)$ yields \[
(false,\{f_{1},f_{2}\})\]
We set $H=false$ and compute \[
W=LocalProjection(\{f_{1},f_{2}\},(-2))=(W_{1},\{f_{1},f_{2}\})\]
where $W_{1}=\{x-1,x+1\}$ is the set of factors of $discr_{y}f_{1}$
and $discr_{y}f_{2}$ ($res_{y}(f_{1},f_{2})$ is not a part of the
projection because $f_{1}(-2,y)$ and $f_{2}(-2,y)$ have no real
roots). We go to step $(3f)$ and set $V_{1}=V_{1}\cup W_{1}=\{x-1,x+1\}$.
In step $(3h)$ we find $s_{1}=v_{1}=-\infty$ and $s_{2}=v_{2}=\infty$.
In step $(3i)$ $Q$ remains empty. In step $(3n)$ we obtain $A=\{(0,false)\}$.
The loop ends after one iteration and the returned value is $(false,(\{x-1,x+1\}))$.

In step $(3e)$ of $LPCAD(S,())$ we obtain $H=false$ and $U=(\{x-1,x+1\})$.
\[
LocalProjection(\{x-1,x+1\},())\]
yields $(\{x-1,x+1\})$. In step $(3h)$ we find$s_{1}=v_{1}=-\infty$,
$s_{2}=Root_{x,1}(x+1)=-1$, and $v_{2}=-1$. In step $(3k)$ we add
a tuple representing $x=-1$ to $stack$. In step $(3n)$ we obtain
$A=\{(0,-1<x<1\wedge-2\sqrt{1-x^{2}}<y<2\sqrt{1-x^{2}}),(-2,false)\}$.

In the third iteration of loop $(3)$ in $LPCAD(S,())$ we remove
a tuple representing $x=-1$ from $stack$ and pick $a_{1}=-2$. The
calls to $PEval$ in steps $(3c)$ and $(3d)$ yield $undecided$.
Step $(3e)$ makes a recursive call to $LPCAD(S,(-1))$.

In the first iteration of loop $(3)$ in $LPCAD(S,(-1))$ we remove
a tuple representing $-\infty<y<\infty$ from $stack$ and pick $a_{2}=0$.
$PEval(S_{CNF},(-1,0))$ in step $(3c)$ yields \[
(false,\{f_{1},f_{3}\})\]
We set $H=false$ and compute \[
W=LocalProjection(\{f_{1},f_{3}\},(-1))=(W_{1},\{f_{1},f_{2}\})\]
where, by Remark \ref{rem:Zdim}, we can take $W_{1}=\emptyset$.
We go to step $(3f)$ and the set $V_{1}$ remains empty. In step
$(3h)$ we find $s_{1}=s_{2}=Root_{y,1}f_{1}$ and $v_{1}=v_{2}=0$.
In step $(3i)$ we set $Q=Q\cup\{f_{1}\}=\{f_{1}\}$. In step $(3j)$
we add tuples representing $0<x<\infty$ and $-\infty<x<0$ to $stack$.
In step $(3n)$ we obtain $A=\{(0,false)\}$. 

In the second iteration of loop $(3)$ in $LPCAD(S,(-1))$ we remove
a tuple representing $-\infty<y<0$ from $stack$ and pick $a_{2}=-1$.
$PEval(S_{CNF},(-1,-1))$ in step $(3c)$ yields $(false,\{f_{1},f_{2}\})$.
We set $H=false$ and compute $W=LocalProjection(\{f_{1},f_{2}\},(-1))=(W_{1},\{f_{1},f_{2}\})$,
where, by Remark \ref{rem:Zdim}, we can take $W_{1}=\emptyset$.
We go to step $(3f)$ and the set $V_{1}$ remains empty. In step
$(3h)$ we find $s_{1}=v_{1}=-\infty$, $s_{2}=Root_{y,1}f_{1}$ and
$v_{2}=0$. In step $(3i)$ we set $Q=Q\cup\{f_{1}\}=\{f_{1}\}$.
In step $(3n)$ we obtain $A=\{(0,false),(-1,false)\}$. 

The remaining iteration of loop $(3)$ look very similar to the last
one. In step $(4)$ we obtain $F=false$. In step $(5)$ we compute
\[
W=LocalProjection(\{f_{1}\},(-1))=(\emptyset,\{f_{1}\})\]
by Remark \ref{rem:Zdim}. The returned value is $(false,(\emptyset))$.

In step $(3e)$ of $LPCAD(S,())$ we obtain $H=false$ and $U=(\emptyset)$.
$LocalProjection(\emptyset,())$ yields $(\emptyset)$. In step $(3g)$
we set $G=(x=-1)$. In step $(3n)$ we obtain $A=\{(0,-1<x<1\wedge-2\sqrt{1-x^{2}}<y<2\sqrt{1-x^{2}}),(-2,false),(-1,false)\}$.

The remaining two iterations of loop $(3)$ look very similar to the
last two. In step $(4)$ we obtain $F=-1<x<1\wedge-2\sqrt{1-x^{2}}<y<2\sqrt{1-x^{2}}$
and the returned value is $(-1<x<1\wedge-2\sqrt{1-x^{2}}<y<2\sqrt{1-x^{2}},())$.

\section{Empirical Results}

Algorithm \ref{alg:LPCAD} ($LPCAD$) and the cylindrical algebraic
decomposition ($CAD$) algorithm have been implemented in C, as a
part of the kernel of \emph{Mathematica}. The experiments have been
conducted on a Linux server with a $32$-core $2.4$ GHz Intel Xeon
processor and $378$ GB of RAM available for all processes. The reported
CPU time is a total from all cores used. Since we do not describe
the use of equational constraints in the current paper, we have selected
examples that do not involve equations.

\subsection{Benchmark examples}

We compare the performance of $LPCAD$ and $CAD$ for the following
three problems and for the $7$ examples from Wilson's benchmark set
\cite{W2} (version 4) that do not contain equations. 
\begin{example}
\label{exa:Two-quadratics}(Two quadratics) Find a cylindrical algebraic
decomposition of the solution set of $ax^{2}+bx+c\geq0\wedge dx^{2}+ex+f\geq0$
with the variables ordered $(a,b,c,d,e,f,x)$.
\end{example}

\begin{example}
\label{exa:Ellipse in a square}(Ellipse in a square) Find conditions
for ellipse $\frac{(x-c)^{2}}{a}+\frac{(y-d)^{2}}{b}<1$ to be contained
in the square $-1<x<1\wedge-1<y<1$. We compute a cylindrical algebraic
decomposition of the solution set of\begin{eqnarray*}
 & \forall x,y\in\mathbb{R}\; a>0\wedge b>0\wedge b(x-c)^{2}+a(y-d)^{2}<ab\Rightarrow\\
 & -1<x<1\wedge-1<y<1\end{eqnarray*}
with the free variables ordered $(a,b,c,d)$.
\end{example}

\begin{example}
\label{exa:Distance to three squares}(Distance to three squares)
Find the distance of a point on the parabola shown in the picture
to the union of three squares.

\includegraphics[width=0.65\columnwidth, trim = -50mm 3mm 10mm 5mm, clip]{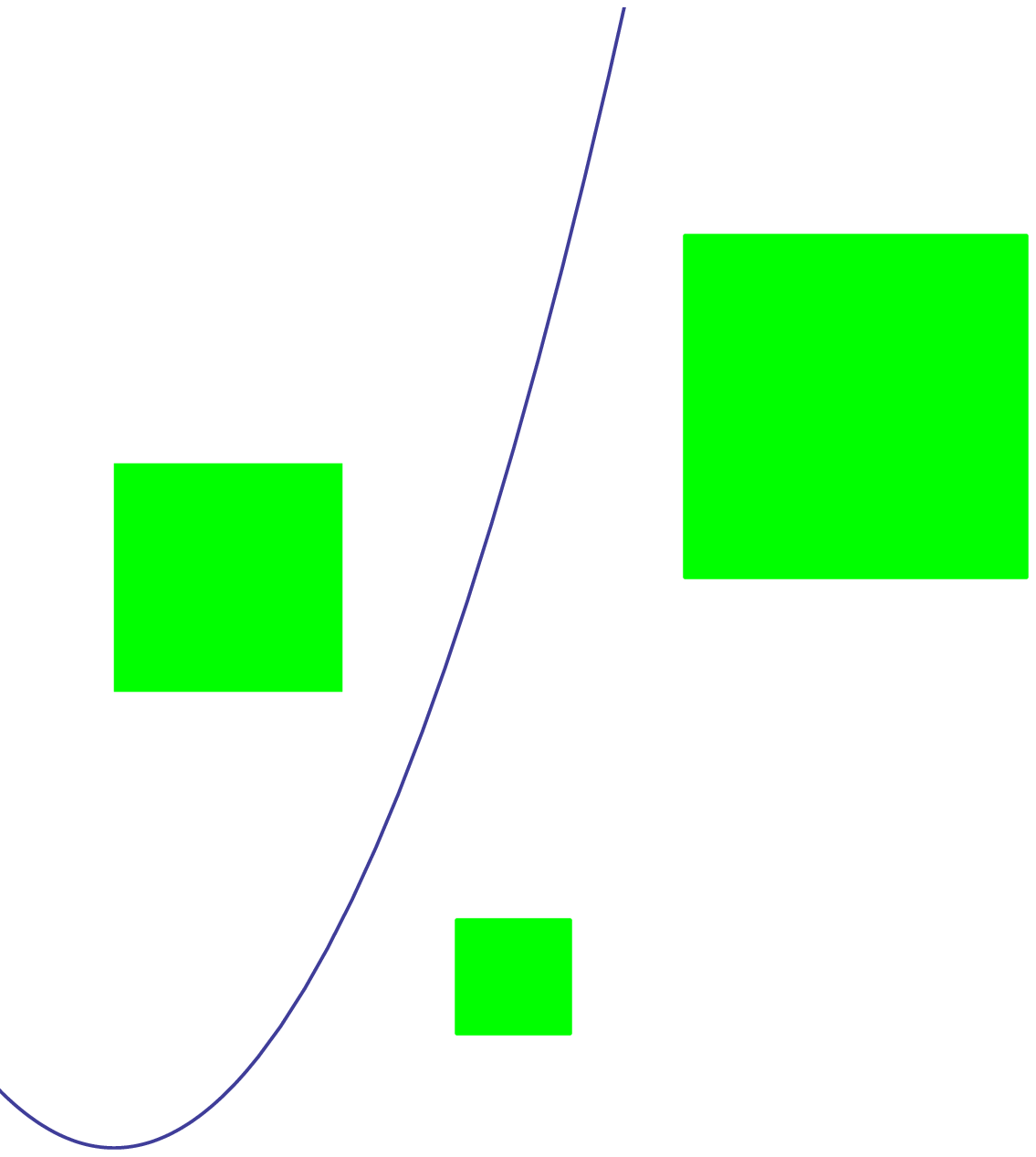}

We compute a cylindrical algebraic decomposition of the solution set
of\begin{eqnarray*}
 & \exists x,y\in\mathbb{R}\;(x-a)^{2}+(y-a^{2}+2)^{2}\leq d\wedge\\
 & (0\leq x\leq1\wedge0\leq y\leq1\vee\\
 & \frac{3}{2}\leq x\leq2\wedge-\frac{3}{2}\leq y\leq-1\vee\\
 & \frac{5}{2}\leq x\leq4\wedge\frac{1}{2}\leq y\leq2)\end{eqnarray*}
with the free variables ordered $(a,d)$.
\end{example}
Results of experiments are given in Table \ref{tab:Benchmark-examples}.
Examples from \cite{W2} are marked with W and the original number.
The columns marked Time give the CPU time, in seconds, used by each
algorithm. The columns marked Cells give the number of cells constructed
by each algorithm. The column marked WO tells whether the system is
well-oriented.

\begin{table}

\caption{\label{tab:Benchmark-examples}Benchmark examples}
\begin{tabular}{|c|c|c|c|c|c|}
\hline 
Example & \multicolumn{2}{c|}{Time} & \multicolumn{2}{c|}{Cells} & WO\tabularnewline
\cline{2-5} 
 & $CAD$ & $LPCAD$ & $CAD$ & $LPCAD$ & \tabularnewline
\hline 
\ref{exa:Two-quadratics} & $97.7$ & $2.61$ & $324137$ & $3971$ & N\tabularnewline
\hline 
\ref{exa:Ellipse in a square} & $>100000$ & $38.1$ & $?$ & $67535$ & N\tabularnewline
\hline 
\ref{exa:Distance to three squares} & $2402$ & $44.9$ & $13105366$ & $71411$ & Y\tabularnewline
\hline 
W 2.3 & $0.063$ & $0.088$ & $91$ & $84$ & Y\tabularnewline
\hline 
W 2.8 & $0.015$ & $0.015$ & $15$ & $15$ & Y\tabularnewline
\hline 
W 2.9 & $0.047$ & $0.011$ & $59$ & $19$ & Y\tabularnewline
\hline 
W 2.10 & $0.135$ & $0.197$ & $779$ & $647$ & Y\tabularnewline
\hline 
W 2.11 & $0.045$ & $0.007$ & $463$ & $31$ & N\tabularnewline
\hline 
W 2.16 & $0.076$ & $0.025$ & $644$ & $4$ & Y\tabularnewline
\hline 
W 6.5 & $2.10$ & $1.58$ & $11279$ & $2536$ & Y\tabularnewline
\hline
\end{tabular}

\end{table}

\subsection{Randomly generated examples}

For this experiment we used randomly generated systems with $5$,
$6$, and $7$ variables, $25$ systems with each number of variables.
The systems had the form $f<0$ or $f\leq0$, with a quadratic polynomial
$f$ with $6$ to $15$ terms and $10$-bit integer coefficients.
We selected systems for which at least one of the algorithms finished
in $1000$ seconds. Results of experiments are given in Table \ref{tab:Random}.
The columns marked Time give the ratio of $CAD$ timing divided by
$LPCAD$ timing. The columns marked Cells give the ratio of the numbers
of cells constructed by $CAD$ and by $LPCAD$. The ratios are computed
for the examples for which both algorithms finished in $1000$ seconds.
The columns marked Mean give geometric means. The column marked TO
gives the number of examples for which $CAD$ did not finish in $1000$
seconds. $LPCAD$ finished in $1000$ seconds for all examples. The
column marked WO gives the number of systems that were well-oriented.

\begin{table}
\caption{\label{tab:Random}Randomly generated examples}
\begin{tabular}{|c|c|c|c|c|c|c|c|c|}
\hline 
Var & \multicolumn{3}{c|}{Time} & \multicolumn{3}{c|}{Cells} & \multicolumn{1}{c|}{TO} & WO\tabularnewline
No. & \multicolumn{3}{c|}{$CAD/LPCAD$} & \multicolumn{3}{c|}{$CAD/LPCAD$} &  & \tabularnewline
\cline{2-7} 
 & Mean & \multicolumn{1}{c|}{Min} & \multicolumn{1}{c|}{Max} & Mean & Min & Max &  & \tabularnewline
\hline 
$5$ & $1.64$ & $0.50$ & $11.1$ & $2.55$ & $0.75$ & $17.3$ & $8$ & $4$\tabularnewline
\hline 
$6$ & $3.82$ & $0.80$ & $55.7$ & $6.14$ & $1$ & $98.4$ & $1$ & $10$\tabularnewline
\hline 
$7$ & $26.9$ & $5.10$ & $257$ & $43.2$ & $6.74$ & $408$ & $3$ & $0$\tabularnewline
\hline
\end{tabular} 
\end{table}

\subsection{Conclusions}

Experiments suggest that for systems that are not well-oriented LPCAD
performs better than CAD. For well oriented-systems LPCAD usually
construct less cells than CAD, but this does not necessarily translate
to a faster timing, due to overhead from re-constructing projection
for every cell. However, for some of the well-oriented systems, for
instance Example \ref{exa:Distance to three squares}, LPCAD is significantly
faster than CAD, due to its ability to exploit the Boolean structure
of the problem. Unfortunately we do not have a precise characterisation
of such problems. Nevertheless LPCAD may be useful for well-oriented
problems that prove hard for the CAD algorithm or may be tried in
parallel with the CAD algorithm.

\bibliographystyle{abbrv}

\end{document}